\newtheorem{mydef}{Definition}
\newtheorem{mytheorem}{Theorem}
\newtheorem{mylemma}{Lemma}
\begin{document}
%
\title{Differentially Private Double Spectrum Auction with Approximate Social Welfare Maximization}
%
%
%

\author{\IEEEauthorblockN{Zhili Chen, Tianjiao Ni, Hong Zhong, Shun Zhang, Jie Cui}\\
\IEEEauthorblockA{School of Computer Science and Technology, Anhui University, Hefei, China\\
Email:{zlchen@ahu.edu.cn, tjni94@163.com, zhongh@mail.ustc.edu.cn,\\
 shzhang27@163.com, cuijie@mail.ustc.edu.cn}}}

\maketitle

\begin{abstract}
Spectrum auction is an effective approach to improving spectrum utilization, by leasing idle spectrum from primary users to secondary users. Recently, a few differentially private spectrum auction mechanisms have been proposed, but, as far as we know, none of them addressed the differential privacy in the setting of double spectrum auctions. In this paper, we combine the concept of differential privacy with double spectrum auction design, and present a Differentially private Double spectrum auction mechanism with approximate Social welfare Maximization (DDSM). Specifically, we design the mechanism by employing the exponential mechanism to select clearing prices for the double spectrum auction with probabilities exponentially proportional to the related social welfare values, and then improve the mechanism in several aspects like the designs of the auction algorithm, the utility function and the buyer grouping algorithm. Through theoretical analysis, we prove that DDSM achieves differential privacy, approximate truthfulness, approximate social welfare maximization. Extensive experimental evaluations show that DDSM achieves a good performance in term of social welfare.
\end{abstract}

\begin{IEEEkeywords}
Differential privacy, Exponential mechanism, Spectrum auction, Truthfulness, Social welfare
\end{IEEEkeywords}

%
\IEEEpeerreviewmaketitle

\section{Introduction}
\IEEEPARstart{W}{ith} the rapid development of wireless technologies, radio spectrum has become a scarce resource. Being controlled by governmental agencies (e.g., Federal Communication Commission (FCC) in the US), spectrum is usually distributed with traditional static allocation policy. Nevertheless, researches have shown that the efficiency of this type of spectrum allocation is low in both spatial and temporal dimensions. The major problem of static spectrum allocations is that large amounts of distributed spectrum channels owned by licensed primary users are often idle in geographic areas, while unlicensed secondary users starve for spectrum channels to complete their work. Thus, to alleviate this conflict, spectrum actions \cite{Akyildiz2006NeXt} have emerged to redistribute spectrum channels from licensed primary users to unlicensed secondary users. Due to their perceived fairness and allocation efficiency, spectrum auctions are regarded as an effective market-based approach to mitigating the problem of spectrum scarcity.

In recent years, numerous truthful spectrum auction mechanisms have been proposed. Truthfulness incentivizes bidders to bid their true valuations, and thus avoids market operations. Nevertheless, truthfully bidding also reveals bidders' true valuations to the auctioneer, leading to serious privacy issues. To protect privacy, some privacy-preserving spectrum auction schemes based on techniques of cryptography or secure multi-party computation have been proposed \cite{Huang2013SPRING,Huang2015PPS,Chen2016Towards, Chen2014PS, Chen2017Secure}. These schemes protect privacy from a dishonest auctioneer in the process of auction computations, but normally do not consider the privacy issue by inferring sensitive information from the auction outcome. This later issue can be addressed by spectrum auction mechanisms with differential privacy guarantee, and this is what we focus in this paper.

Typically, the change in only one bidder's bid in an auction might significantly impact on the auction outcome. Conversely, by comparing the outcomes of two auctions whose bid profiles only differ in a single bid, one could infer some sensitive information on bidders' bids. For instance, three bidders $A$, $B$ and $C$ bid $9$, $8$ and $5$, respectively, for a good in a second-price auction. The auction outcome is that the winner is $A$ and its price is $8$. Some time later, the same auction happens again except that $B$ lowers its bid to $0$. Then the auction outcome becomes that the winner is $A$ and its price is $5$. By comparing the two outcomes, $B$ knows that the third price is $5$, which should be kept secret in a second price auction. To address this issue, the concept of differential privacy \cite{Dwork2008Differential,Dwork2006Differential} has been applied to ensure that any such two auctions produce almost identically distributed outcomes, and thus little sensitive information would be leaked by outcome comparisons. In fact, there have been several spectrum auction mechanisms proposed with differential privacy guarantee recently \cite{Zhu2014Differentially,Zhu2015Differentially}. However, they only addressed single-sided spectrum auctions. As far as we know, there is no previous work addressing differential privacy in the setting of double spectrum auctions. To fill this gap, our aim is to design a differentially private mechanism for double spectrum auctions.

In this paper, we propose a \underline{D}ifferentially private \underline{D}ouble spectrum auction mechanism with approximate \underline{S}ocial welfare \underline{M}aximization (DDSM). As shown in Fig.~\ref{fig:framework}, in the auction framework, sellers submit their quotations, and buyers submit their bids to an auctioneer, who performs the auction computations, and announces the auction outcome to both sellers and buyers after the auction. We assume that the auctioneer is trusted, or otherwise we can simulate a trusted auctioneer with secure multi-party computations.  To achieve differential privacy, we employ one exponential mechanism \cite{Mcsherry2007Mechanism} to select the selling clearing price, and the other to select the buying clearing price, exponentially proportional to their related social welfare values, respectively. And then we use both selling and buying clearing prices to determine the winners and allocate spectrum channels. We carefully design our double spectrum auction mechanism such that it also achieves approximate truthfulness and approximate social welfare maximization. Later, we improve our mechanism in several aspects, and further promote the performance of our auction mechanism in term of the social welfare.

\begin{figure}[!t]
\begin{center}
\includegraphics[width=0.48\textwidth]{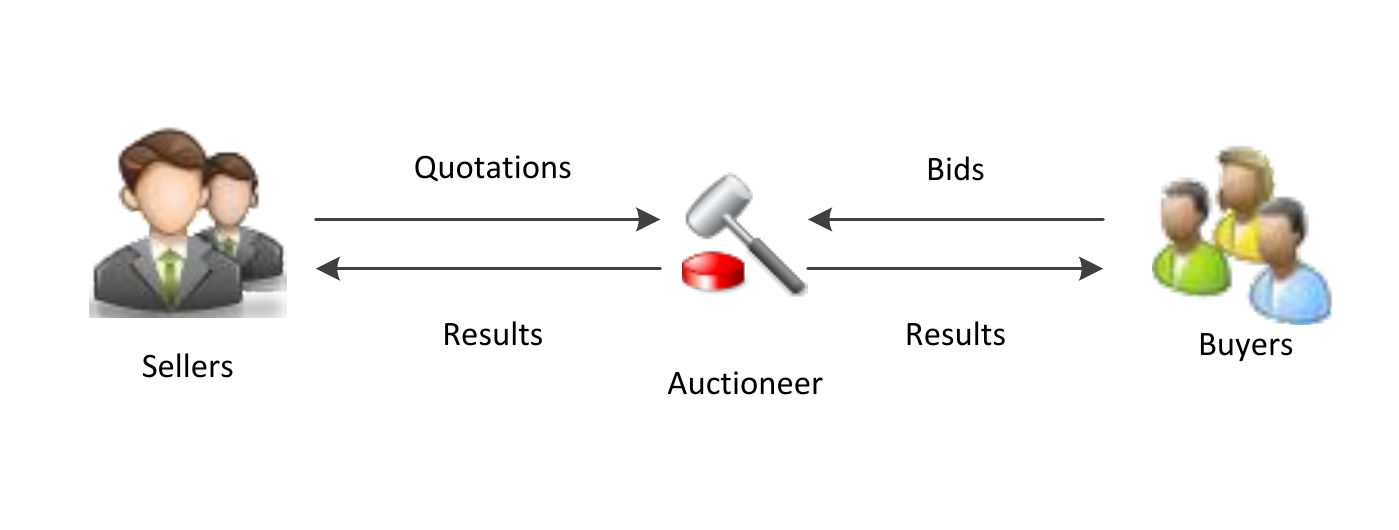}
\caption{Auction framework of DDSM: Sellers submit their quotations and buyers submit their bids to the auctioneer, and the auctioneer returns the results to them.}\label{fig:framework}
\end{center}
\end{figure}

 Our main contributions can be summarized as follows.
\begin{itemize}
   \item We integrate the exponential mechanism with double spectrum auction to achieve differential privacy and approximate social welfare maximization. We also carefully design the auction mechanism to achieve approximate truthfulness. As far as we know, our proposal is the first differentially private mechanism for double spectrum auctions.
    \item We improve our mechanism in the designs of auction algorithm, utility function, and grouping algorithm, such that it achieves a better performance in term of social welfare, and it is applicable to any random buyer grouping algorithm independent on buyers' bids.
    \item We fully implement DDSM, and do extensive experiments to evaluate the performance in term of social welfare. The experimental results show that our mechanism achieves a good performance compared to the double spectrum auction without differential privacy guarantee.
\end{itemize}

The remainder of this paper is organized as follows. Section~\ref{sec:relatedwork} briefly reviews related work and Section~\ref{sec:preliminaries} provides the problem model and introduces some related concepts. We present the detailed design and theoretical proof of DDSM in Section~\ref{sec:ddsm-details}, and improve DDSM in Section~\ref{sec:improvement}. In Section~\ref{sec:experiment}, we implement DDSM, and evaluate the performance in terms of social welfare. In the end, the paper is summarized in Section~\ref{sec:conclusions}.

\section{Related Work}\label{sec:relatedwork}

\textbf{Truthful spectrum auctions.} In recent years, a large number of truthful spectrum auction mechanisms have been proposed. Zhou et al. proposed TRUST \cite{Zhou2009TRUST}, the first truthful double spectrum auction framework enabling spectrum reuse. Jia et al. \cite{Jia2009Revenue} designed an exponential-time truthful spectrum auction mechanism with optimal revenue, and further presented a truthful suboptimal auction mechanism in polynomial time. In \cite{Al2011Truthful}, the author proposed a polynomial-time truthful spectrum auction mechanism with a performance guarantee on revenue. Papers \cite{Xu2011Efficient}, \cite{Zhu2012Truthful} and \cite{Huang2013Near} designed efficient spectrum allocation algorithms to achieve truthfulness and maximize social welfare.  Unfortunately, none of these mechanisms considered any guarantee of privacy.

\textbf{Privacy-preserving spectrum auctions.} There are mainly two kinds of related works for privacy-preserving auctions. One is to protect privacy with cryptographical techniques from attacks inferring sensitive information by analyzing the process of auction computations, and the other is to protect privacy with the notion of differential privacy from attacks inferring sensitive information by analyzing the auction outcome. A brief review for these two kinds of works is given as follows.

\emph{Cryptographically private spectrum auctions.} Pan $et$ $al.$ proposed THEMIS \cite{Pan2011Purging}, the first secure spectrum auction to prevent malicious behaviors of the auctioneer. Later, Huang et al. \cite{Huang2013SPRING} presented SPRING, the first strategy-proof and privacy-preserving spectrum auction mechanism. Whereafter, a strategy-proof mechanism was designed to maximize social welfare without disclosing the bid privacy in \cite{Huang2015PPS}. Wu et al. \cite{wu2015towards} designed both strategy-proof and privacy-preserving auction mechanisms for noncooperative wireless networks. These works above mainly focused on privacy preservation for single-sided spectrum auctions. For double spectrum auctions, papers \cite{Chen2014PS,Chen2017Secure} employed various cryptography techniques to achieve privacy preservation. However, these studies preserved privacy in the process of auction computations, and they did not consider the privacy issues that resulted from analyzing the auction outcome.


\emph{Differentially private spectrum auctions.} Lately, a new concept of differential privacy \cite{Dwork2008Differential,Dwork2006Differential} is proposed. Then, McSherry and Talwar \cite{Mcsherry2007Mechanism} first combined differential privacy with mechanism design, paying attention to approximate truthfulness, collusion resilience and mechanism repeatability. Huang and Kannan \cite{huang2012exponential} further studied the exponential mechanism, the first general tool used to design mechanisms achieving both truthfulness and differential privacy. Xu et al. \cite{Xu2017PADS} first addressed the privacy-preserving cloud resource allocation problem based on differential privacy. Dong et al. \cite{dong2018protecting} designed a novel scheme to select spectrum-sharing secondary user in a differentially private manner. In \cite{Zhu2014Differentially,Zhu2015Differentially,Wu2016Designing}, applying differential privacy, authors designed mechanisms with approximate truthfulness and revenue maximization for single-sided spectrum auctions. As far as we know, none of the existing works protects differential privacy in the setting of double spectrum auctions.

\section{Technical Preliminaries}
\label{sec:preliminaries}
In this section, we provide the problem model of the double spectrum auction, and introduce some related solution concepts about mechanism design and differential privacy.

\subsection{Problem Model}
We consider a sealed-bid double spectrum auction with one auctioneer, $M$ sellers and $N$ buyers. We assume that each seller sells one distinct channel and each buyer buys one channel like the work \cite{yao2011efficient}. The channels are homogeneous to buyers, so that each buyer could request any channel. A sealed-bid auction is run by the auctioneer who is trustworthy. Bidders submit their bids privately to the auctioneer without any knowledge of others, then the auction mechanism outputs the final prices to determine the winners.

Spectrum auctions are different from traditional auctions, because spectrum has reusability of time and space (Here we mainly consider spatial reusability). Spatial reusability indicates that multiple buyers can share the same channel if they do not interfere with each other. In the auction, we want to protect the bidders' privacy and meanwhile maximize the social welfare. Here, the social welfare the auction mechanism is defined as the sum of the true values of all winning buyers subtracting that of all winning sellers. Better social welfare means more social value created by an auction.

In the auction, let $\mathbf{q}$ denote the profile of all quotations of sellers and $\mathbf{b}$ denote the profile of all bids of buyers. Each seller $m$ submits its quotation $q_m \in [1,q_{max}]$ for selling a channel, based on its true valuation $v_m^s$, and gets $p_m^s$ as its final selling price of the channel. Each buyer $n$ submits its bid $b_n \in [1,b_{max}]$ for buying a channel, based on its true valuation $v_n^b$, and gets $p_n^b$ as its final buying price of the channel. If the seller or buyer wins, it has the utility of $u_m=p_m^s-v_m^s$ or $u_n=p_n^b-v_n^b$; else the utility is 0. In a truthful auction, any bidder cannot improve its utility by submitting a dishonest quotation or bid, that is $q_m \neq v_m^s$ or $b_n \neq v_n^b$. When the auction is of individual rationality, each winning seller gets a selling price not lower than its quotation, and each winning buyer gets a buying price not higher than its bid.


\subsection{Related Solution Concepts}
we now review some important and related solution concepts used in this paper from mechanism design and differential privacy.

Firstly, we introduce some important definitions in mechanism design.
\begin{mydef}[Dominant Strategy \cite{Nisan2007Algorithmic}]
Strategy $s_i$ is a player $i$'s dominant strategy in a game, if for any strategy ${s'_i}$ $\neq$ $s_i$ and any other players' strategy profile $s_{-i}$,
\begin{displaymath}
u_i(s_i, s_{-i}) \geq u_i({s'_i}, s_{-i})
\end{displaymath}
\end{mydef}
The concept of truthfulness is related to dominant strategy. For each bidder in an auction, truthfulness means revealing its truthful value is a dominant strategy. However, exact truthfulness for  auctions is often too restrictive to be compatible with other properties such as social welfare maximization, and thus we turn to consider a weaker notion of $\gamma$-truthfulness. Moreover, for double spectrum auctions, we should  achieve both seller and buyer truthfulness. Therefore, according to the definition of approximate truthfulness in literature \cite{Gupta2010Differentially}, we give the definition of $\gamma$-truthfulness for double spectrum auctions as shown in Definition \ref{def:gamma-truthfulness}.
\begin{mydef}[$\gamma$-truthfulness]\label{def:gamma-truthfulness}
 Let $\mathbf{b}$ denote the profile of all buyers' truthful bids, $\mathbf{q}$ denote the profile of all sellers' truthful quotations. Let $b_n$ and $q_m$ denote the truthful strategies for any buyer $n$ and any seller $m$, respectively. A double auction mechanism is $\gamma$-truthfulness in expectation, if and only if for any strategy $b_n'$ $\neq$ $b_n$ it satisfies
\begin{displaymath}
E[u_n(b_n, \mathbf{q}, \mathbf{b_{-n}})] \geq E[u_n(b_n', \mathbf{q}, \mathbf{b_{-n}})]-{\gamma}
\end{displaymath}
at the same time, for any strategy $q_m'$ $\neq$ $q_m$ it satisfies
\begin{displaymath}
E[u_m(q_m, \mathbf{b}, \mathbf{q_{-m}})] \geq E[u_n(q_m', \mathbf{b}, \mathbf{q_{-m}})]-{\gamma}
\end{displaymath}
where $\mathbf{b_{-n}}$ represents the strategy profile of buyers except buyer $n$, $\mathbf{q_{-m}}$ represents the strategy profile of sellers except seller $m$, and $\gamma$ are a small constant.

\end{mydef}

Differential privacy has been widely studied in the field of theoretical computer science \cite{Dwork2008Differential}. This privacy concept ensures that for any two adjacent databases, the two corresponding outputs resulted are basically consistent. The specific definition in our context is as follows.
\begin{mydef}[Differential Privacy]
Let $\mathbf{b}$ denote the profile of all buyers' bids, $\mathbf{q}$ denote the profile of all sellers' quotations, and $\mathbf{D} = \mathbf{b} \cup \mathbf{q}$ represent the union profile of the two aforementioned profiles. A randomized double spectrum auction mechanism $\mathcal{M}$ gives $\epsilon$-differential privacy, if for any two profiles $\mathbf{D}$ and $\mathbf{D'}$ differing in the value of only one bid or quotation, and all $S$ $\subseteq$ Rang($\mathcal{M}$), it satisfies
\begin{displaymath}
Pr[\mathcal{M}(\mathbf{D})\in S]\leq exp(\epsilon)\times Pr(\mathcal{M}(\mathbf{D'})\in S)
\end{displaymath}
where $\epsilon$ is a small constant called the privacy budget representing the privacy level achieved.
\end{mydef}

A power tool to achieve differential privacy is exponential mechanism. It assigns a selection probability to each possible outcome according to a utility function that maps a pair of input and outcome to a utility score. It is defined as follows:
\begin{mydef}[Exponential Mechanism \cite{Mcsherry2007Mechanism}]
Given a range $\mathbb{P}$, and a utility function $F(\mathbf{D}, p)$ that takes as input a data profile $\mathbf{D}$ and an output p in the range $\mathbb{P}$, the exponential mechanism $\mathcal{M}(\mathbf{D}, F, \mathbb{P})$ outputs p from the range $\mathbb{P}$ with probability
\begin{displaymath}
Pr[\mathcal{M}(\mathbf{D}, F, \mathbb{P})=p]\propto exp(\epsilon F(\mathbf{D}, p)/2\Delta F)
\end{displaymath}
where $\Delta F$ is the sensitivity of function $F$, maximum difference between the two utility scores of any pair of data profiles $\mathbf{D}$ and $\mathbf{D'}$ differing in the value of a single element for any output p, and $\epsilon$ is the privacy budget.
\end{mydef}
Some theoretical results on differential privacy, which will be used in our later proofs, are depicted in the following lemmas.
\begin{mylemma}[Composition \cite{dwork2014algorithmic}]\label{the:composition}
Given the randomized algorithm $\mathcal{M}_1(D)$ that satisfies $\epsilon_1$-differential privacy and $\mathcal{M}_2(D)$ that satisfies $\epsilon_2$-differential privacy, then $\mathcal{M}(D)$=($\mathcal{M}_1(D)$, $\mathcal{M}_2(D))$ satisfies $(\epsilon_1+\epsilon_2)$-differential privacy.
\end{mylemma}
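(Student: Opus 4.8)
The plan is to argue directly from the definition of $\epsilon$-differential privacy, treating the composed mechanism $\mathcal{M}(D) = (\mathcal{M}_1(D), \mathcal{M}_2(D))$ as a pair-valued randomized algorithm whose two components are run independently on the same input. First I would fix two databases $D$ and $D'$ differing in a single element, and fix an arbitrary output pair $(s_1, s_2)$ in the range of $\mathcal{M}$. Since $\mathcal{M}_1$ and $\mathcal{M}_2$ use independent internal randomness, the probability that $\mathcal{M}(D)$ produces $(s_1,s_2)$ factors as $\Pr[\mathcal{M}_1(D) = s_1]\cdot\Pr[\mathcal{M}_2(D) = s_2]$. Applying the $\epsilon_1$- and $\epsilon_2$-privacy guarantees of the two component mechanisms to each factor separately gives
\begin{displaymath}
\Pr[\mathcal{M}_1(D) = s_1]\cdot\Pr[\mathcal{M}_2(D) = s_2] \leq e^{\epsilon_1}\Pr[\mathcal{M}_1(D') = s_1]\cdot e^{\epsilon_2}\Pr[\mathcal{M}_2(D') = s_2] = e^{\epsilon_1+\epsilon_2}\Pr[\mathcal{M}(D') = (s_1,s_2)],
\end{displaymath}
which is exactly the pointwise $(\epsilon_1+\epsilon_2)$-privacy bound on singleton outcomes.

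The second step is to lift this pointwise bound to an arbitrary measurable set $S \subseteq \mathrm{Range}(\mathcal{M})$. For discrete ranges this is immediate: sum the inequality over all $(s_1,s_2)\in S$, pulling the common factor $e^{\epsilon_1+\epsilon_2}$ out of the sum, to obtain $\Pr[\mathcal{M}(D)\in S]\leq e^{\epsilon_1+\epsilon_2}\Pr[\mathcal{M}(D')\in S]$. For continuous ranges one replaces the sum by an integral of the joint density over $S$ and uses the analogous density-level inequality; since the paper's exponential mechanism selects clearing prices from a finite candidate set, the discrete version suffices for our purposes and I would present that.

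I do not expect a serious obstacle here — this is the classical sequential composition theorem, and the only subtlety worth flagging is the independence of the randomness used by $\mathcal{M}_1$ and $\mathcal{M}_2$, which is what makes the joint probability factor and is implicitly assumed in the statement. If one wanted to be scrupulous about the case where $\mathcal{M}_2$ is allowed to depend on the output of $\mathcal{M}_1$ (adaptive composition), the factorization argument still goes through by conditioning on the first output and applying $\epsilon_2$-privacy to the conditional distribution, but for the non-adaptive use we make of it in the DDSM analysis the simple product argument above is all that is needed.
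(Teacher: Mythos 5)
Your proof is correct: the paper does not prove this lemma at all (it imports it by citation from the differential-privacy literature), and your argument is exactly the standard one — factor the joint probability using the independence of the two mechanisms' coins, apply the $\epsilon_1$- and $\epsilon_2$-bounds to each factor, and sum over the outcomes in $S$; your remark that the adaptive case follows by conditioning on the first output is also the right way to cover the stronger form. Nothing is missing for the non-adaptive use made of the lemma in the DDSM analysis.
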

\begin{mylemma}[Convexity \cite{Li2016Differential}]\label{the:convexity}
Given $k$ mechanisms $\mathcal{M}_1$, $\mathcal{M}_2$, ..., $\mathcal{M}_k$ that satisfy $\epsilon$-differential privacy, and $p_1$, $p_2$, ..., $p_k$ $\in$ [0,1] such that $\sum_{i=1}^k{p_i}=1$, let $\mathcal{M}$ denote the mechanism that applies $\mathcal{M}_i$ with probability $p_i$. Then $\mathcal{M}$ satisfies $\epsilon$-differential privacy.
\end{mylemma}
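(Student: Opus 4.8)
The plan is to verify the defining inequality of $\epsilon$-differential privacy for $\mathcal{M}$ directly, reducing it to the assumed guarantees of the component mechanisms by conditioning on which component is invoked. First I would fix an arbitrary pair of profiles $\mathbf{D}$ and $\mathbf{D'}$ that differ in the value of only a single bid or quotation, together with an arbitrary $S \subseteq \mathrm{Range}(\mathcal{M})$. Since $\mathcal{M}_1,\dots,\mathcal{M}_k$ are assumed to share a common range (so that ``apply $\mathcal{M}_i$ with probability $p_i$'' is well defined), the same $S$ is simultaneously a legitimate event for every $\mathcal{M}_i$.

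The key step is the law of total probability applied over the choice of index. Because the index $i$ selected by $\mathcal{M}$ is drawn from the fixed distribution $(p_1,\dots,p_k)$ independently of the input profile, we have
\[
\Pr[\mathcal{M}(\mathbf{D}) \in S] \;=\; \sum_{i=1}^{k} p_i \, \Pr[\mathcal{M}_i(\mathbf{D}) \in S].
\]
I would then invoke the $\epsilon$-differential privacy of each $\mathcal{M}_i$ term by term, i.e. $\Pr[\mathcal{M}_i(\mathbf{D}) \in S] \le \exp(\epsilon)\,\Pr[\mathcal{M}_i(\mathbf{D'}) \in S]$, and substitute, using $p_i \ge 0$ to preserve the inequality:
\[
\Pr[\mathcal{M}(\mathbf{D}) \in S] \;\le\; \sum_{i=1}^{k} p_i \exp(\epsilon)\, \Pr[\mathcal{M}_i(\mathbf{D'}) \in S] \;=\; \exp(\epsilon) \sum_{i=1}^{k} p_i \, \Pr[\mathcal{M}_i(\mathbf{D'}) \in S].
\]
Applying the law of total probability once more, in the reverse direction, identifies the last sum as $\Pr[\mathcal{M}(\mathbf{D'}) \in S]$, so $\Pr[\mathcal{M}(\mathbf{D}) \in S] \le \exp(\epsilon)\,\Pr[\mathcal{M}(\mathbf{D'}) \in S]$, which is precisely $\epsilon$-differential privacy for $\mathcal{M}$.

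There is no genuine analytic difficulty here; the single point that needs care — and essentially the only way a careless statement of this argument could fail — is the independence of the mixing coin from the data. The weights $p_i$ must be identical when the input is $\mathbf{D}$ and when it is $\mathbf{D'}$, which holds because $p_1,\dots,p_k$ are fixed constants rather than functions of the profile; were the selection probabilities allowed to depend on the data, the two factorizations above would use different weights and the conclusion would fail in general. I would also note that the parameter $\epsilon$ cannot be improved in general, since taking all $\mathcal{M}_i$ equal to a single $\epsilon$-differentially private mechanism makes $\mathcal{M}$ coincide with that mechanism.
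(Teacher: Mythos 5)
Your proof is correct: the law-of-total-probability decomposition over the (data-independent) index choice, followed by a term-by-term application of each $\mathcal{M}_i$'s $\epsilon$-differential privacy, is exactly the standard argument for this convexity property, and your remark that the mixing weights must not depend on the input is the right caveat. Note that the paper itself states this lemma only as a citation to prior work and gives no proof, so there is nothing in the paper to diverge from; your argument fills that gap soundly.
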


\section{Our Mechanism: DDSM}\label{sec:ddsm-details}
In this section, we design our differentially private double spectrum auction mechanism, DDSM, and analyze its related properties.
\subsection{Design Rationale}


We integrate the exponential mechanism with double spectrum auction to achieve differential privacy, $\gamma$-truthfulness and approximate social welfare maximization. Basically, DDSM is a uniform price auction, where all sellers are paid with a selling clearing price, and all buyers are charged with a buying clearing price. Our main idea is to randomly select a selling clearing price from the set of selling prices, and then a buying clearing price from the set of buying prices, both with carefully designed probability distributions. These probability distributions should ensure that the auction mechanism produces outcome with guarantees of differential privacy, $\gamma$-truthfulness and approximate social welfare maximization. There are  mainly two design challenges, and we address them as follows.

The first challenge is how to perform the double-sided payment selection, while guaranteeing the differential privacy, $\gamma$-truthfulness and approximate social welfare maximization. We address this challenge by applying one exponential mechanism to select the selling clearing price and the other exponential mechanism to select the buying clearing price, and then sequentially compose the two mechanisms to get the whole design.


The second challenge is how to group the buyers while preserving differential privacy. To address this challenge, our main idea is to use a deterministic buyer grouping algorithm independent of buyers' bids, generating a fix grouping for any set of buyers. In next section, we will show that a randomized grouping algorithm independent of buyers' bids can be also applicable.

\subsection{Design Detail}
 Now, we describe the detailed design of DDSM. Specifically, DDSM consists of the following three steps.

\textbf{(1) Buyer Group Formation}

In order to exploit spatial reusability of spectrum, in this step DDSM gathers buyers into groups, such that members in the same group do not conflict with each other. Moreover, the group formation should be deterministic and independent of buyers' bids, and hence changing any buyer's bid will not affect the group formation. That is, by defining neighboring bid profiles as those only differing in a single bid value, we can see that any two neighboring bid profiles respond to the same group form. Specifically, we let the auctioneer construct a conflict graph of buyers according to their geographic locations, and then deterministically find independent sets of nodes (buyers) as the buyer groups.  Buyers in the same group can use the same channel. We denote the buyer groups as $G=\{G_l|l=1, 2, ..., L\}$.


\textbf{(2) Double-sided Price Selection}

In this step, DDSM employs one exponential mechanism to select the selling clearing price and the other to select the buying clearing price, both with related social welfare values as their utility functions. We describe the details in the following.

First, the sets of both quotations and group bids are computed. The set of quotations can be simply defined as $[1 .. q_{\max}]$, where $q_{\max}$ denotes the maximum of all possible quotations. To compute the set of group bids, we need to compute the maximum of all possible group bids. The group bid $g_l$ of group $G_l$ is defined as
\begin{displaymath}
g_l=\min_{1 \le i \le |G_l|}{b_l(i)} \times |G_l|\\
\end{displaymath}
where $b_l(i)$ denotes the $i$th buyer's bid in group $G_l$. Let the maximum of all possible bids be $b_{\max}$, and the maximum size of all buyer groups be $n_{\max}$, we get the maximum group bid is $n_{\max} \cdot b_{\max}$, and thus the set of group bids is $[1 .. n_{\max} \cdot b_{\max}]$.

Second, for fairness and efficiency, DDSM naturally sorts sellers' quotations in non-decreasing order and buyer groups' bids in non-increasing order.
\begin{displaymath}
\mathbb{O'}:q_1\leq q_2\leq \ldots \leq q_M\\
\end{displaymath}
\begin{displaymath}
\mathbb{O''}:g_1\geq g_2\geq \ldots \geq g_L\\
\end{displaymath}

That is, sellers with lower quotations are made to sell their channels first, while groups with higher bids are made to buy their channels first, too.

Third, the selling clearing price $p_s$ and then the buying clearing price $p_g$ are randomly selected with their respective appropriate probability distributions. The two probability distributions should be properly related to the social welfare values that both clearing prices cause. Furthermore, to make all resulted transactions are profitable, it is necessary that $p_s \le p_g$. In the following, we first describe how to compute the related social welfare values given a pair of clearing prices $p_s$ and $p_g$, and then account for how to compute the appropriate probability distributions.

\textit{Computation of social welfare values}. Given a pair of clearing prices $p_s \in [1..q_{max}]$ and $p_g \in [p_s..n_{max} \cdot b_{max}]$, we can find the set of $k_s = \arg \max_m \{q_m \le p_s\}$ potential winning sellers in $\mathbb{O'}$, and the set of $k_g = \arg \max_l \{g_l \ge p_g\}$ potential winning buyer groups in $\mathbb{O''}$. We denote the two sets by TOP $(\mathbb{O'}, k_s)$ and TOP $(\mathbb{O''}, k_g)$, respectively. Obviously, these potential winners can make up at most $k = \min(k_s, k_g)$ profitable transactions. Thus, we can determine $k$ potential winning sellers and $k$ potential winning buyer groups as the winners, and compute the resulted social welfare values. Naturally, if $k_s = k$ (resp. $k_g = k$), then all potential winning sellers (resp. buyer groups) are the winners. However, if $k_s > k$ (resp. $k_g > k$), how to determine $k$ winners out of more than $k$ potential winners is a subtlety for achieving truthfulness. To prevent bid (resp. quotation) operation, the winner determination should be independent on bids (resp. quotations). Especially, note that selecting the $k$ potential winners with the $k$th highest bids (or quotations) allows market operations. So, if $k_s > k$ (resp. $k_g > k$), we
just randomly select $k$ out of $k_s$ (resp. $k_g$) potential winners as the winners.

We represent the sets of randomly selected $k$ winning sellers and $k$ winning buyer groups as follows
\[\mathbb{W}^s = \text{RAND}(\text{TOP}(\mathbb{O'}, k_s), k)\]
\[\mathbb{W}^g = \text{RAND}(\text{TOP}(\mathbb{O''}, k_g), k)\]

Then the $l$th seller in $\mathbb{W}^s$ and the $l$th buyer group in $\mathbb{W}^g$ form the $l$th transaction for winners. The social welfare produced by the $l$th transaction can be computed as
\begin{displaymath}
w_l=\sum _{i=1}^{|G'_l|}{b_l(i)} - q'_l
\end{displaymath}
where $G'_l$ represents the $l$th buyer group in $\mathbb{W}^g$, and $q'_l$ represents the quotation of the $l$th seller in $\mathbb{W}^s$.

Therefore, given any pair of clearing prices $p_s$ and $p_g$, we can compute the resulted social welfare value as follows.
\begin{equation}\label{eq:social-welfare}
W(p_s, p_g) = \sum_{1 \le l \le k}{w_l}
\end{equation}

Note that we write the resulted social welfare as $W(p_s, p_g)$ to explicitly indicate its dependence on $p_s$ and $p_g$.

\textit{Computation of probability distributions}. To achieve differential privacy, we need to select the selling clearing price $p_s$ and then the buying clearing price $p_g$ with appropriate probability distributions. To do this, one exponential mechanism is applied for each clearing price selection.

To select $p_s$ with the exponential mechanism, we need to design a utility function that is only dependent on $p_s$ and reflects a better social welfare value with a higher utility. We let $\mathcal{M}_1$ denote the mechanism selecting $p_s$, and define the utility function as follows.
\begin{equation}
 W_1(\mathbf{q}, \mathbf{b}, p_s) = \max_{p_g} W(p_s, p_g)
\end{equation}
where we also indicate explicitly the dependence of $W_1$ on the buyers' bid profile $\mathbf{b}$ and the sellers' quotation profile $\mathbf{q}$. Since changing any bid or quotation has a maximum impact of ${n_{max}}\cdot{b_{max}} - 1$ on the social welfare, we get the sensitivity $\Delta W_1={n_{max}}\cdot{b_{max}}-1$.

According to the exponential mechanism, the probability distribution of the selling clearing price can be computed as follows.
\begin{equation}\label{eq:ps-distr}
Pr(\mathcal{M}_1(\mathbf{q}, \mathbf{b}) = p_s)=\frac{{\exp}(\frac{\epsilon_1 W_1(\mathbf{q}, \mathbf{b}, p_s)}{2\Delta W_1})}{\sum_{p_s'\in\mathbb{P}_s}{\exp}(\frac{\epsilon_1 W_1(\mathbf{q}, \mathbf{b}, p_s')}{2\Delta W_1})}
\end{equation}
for all $p_s \in \mathbb{P}_s$, where $\mathbb{P}_s = [1..q_{max}]$, and $\epsilon_1$ is the privacy budget.

For selecting $p_g$ given $p_s$ with the exponential mechanism, we let $\mathcal{M}_2$ denote this mechanism, and similarly define the utility function as follows.
\begin{equation}
 W_2(\mathbf{q}, \mathbf{b}, p_s, p_g) = W(p_s, p_g)
\end{equation}
where we have $\Delta W_2={n_{max}}\cdot{b_{max}}-1$.

Then, according to the exponential mechanism, the probability distribution of the buying clearing price can be computed as follows.
\begin{equation}\label{eq:pg-distr}
Pr(\mathcal{M}_2(\mathbf{q}, \mathbf{b}, p_s)=p_g)=\frac{{\exp}(\frac{\epsilon_2 W_2(\mathbf{q}, \mathbf{b}, p_s, p_g)}{2\Delta W_2})}{\sum_{p_g'\in\mathbb{P}_g(p_s)}{\exp}(\frac{\epsilon_2 W_2(\mathbf{q}, \mathbf{b}, p_s, p_g')}{2\Delta W_2})}
\end{equation}
for all $p_g \in \mathbb{P}_g(p_s)$. Here, we let $\mathbb{P}_g(p_s) = [p_s..n_{max} \cdot q_{max}]$, and $\epsilon_2$ is the privacy budget.

To sum up, this step randomly selects the selling and buying clearing prices $p_s$ and $p_g$ in term of the probability distributions \eqref{eq:ps-distr} and \eqref{eq:pg-distr}, respectively. The detailed procedure is depicted in Algorithm~\ref{protocol:price-selection}.

\renewcommand{\algorithmicrequire}{\textbf{Input:}}  
\renewcommand{\algorithmicensure}{\textbf{Output:}}  

\begin{algorithm}[htb]
\caption{Double-sided Price Selection} \label{protocol:price-selection}
\begin{algorithmic}[1]

\REQUIRE The set of sellers $\mathbb{M}$, the sellers' quotation profile $\mathbf{q}$, the set of buyers $\mathbb{N}$, the buyers' bid profile $\mathbf{b}$, the buyer grouping $G$, and privacy budgets $\epsilon_1$ and $\epsilon_2$ with $\epsilon_1 + \epsilon_2 = \epsilon$.
\ENSURE Selling and buying clearing prices $p_s$ and $p_g$

\vspace{0.5\baselineskip}

(1) \underline{Compute group bids:}

\FOR{$l \leftarrow 1$ to $L$}

\STATE $g_l \leftarrow \min_{1 \le i \le |G_l|}{b_l(i)} \times |G_l|$

\ENDFOR

(2) \underline{Sort quotations and bids:}

\STATE $\mathbb{O'}:q_1\leq q_2\leq \ldots \leq q_M$

\STATE $\mathbb{O''}:g_1\geq g_2\geq \ldots \geq g_L$

//\emph{For simplicity, same indexes are abused after sorting}

(3) \underline{Compute social welfare values:}

\FOR{$p_s \leftarrow 1$ to $q_{max}$}

\FOR{$p_g \leftarrow p_s$ to $n_{max}\cdot b_{max}$}

\STATE  $k_s \leftarrow \arg \max_m \{q_m \le p_s\}$
\STATE  $k_g \leftarrow \arg \max_l \{g_l \ge p_g\}$
\STATE $k \leftarrow \min(k_s, k_g)$
\STATE $\mathbb{W}^s \leftarrow \text{RAND}(\text{TOP}(\mathbb{O'}, k_s), k)$
\STATE $\mathbb{W}^g \leftarrow \text{RAND}(\text{TOP}(\mathbb{O''}, k_g), k)$
\STATE Compute $W(p_s, p_g)$ from $\mathbb{W}^s$ and $\mathbb{W}^g$ using Eq.~\eqref{eq:social-welfare}

\ENDFOR

\ENDFOR

(4) \underline{Compute probability distributions and select prices:}

\FOR{$p_s \leftarrow 1$ to $q_{max}$}

\STATE $ W_1(\mathbf{q}, \mathbf{b}, p_s) \leftarrow \max_{p_g} W(p_s, p_g)$
\STATE $\Delta W_1 \leftarrow {n_{max}}\cdot{b_{max}} - 1$
\STATE $Pr(\mathcal{M}_1(\mathbf{q}, \mathbf{b}) = p_s) \leftarrow \frac{{\exp}(\frac{\epsilon_1 W_1(\mathbf{q}, \mathbf{b}, p_s)}{2\Delta W_1})}{\sum_{p_s'\in\mathbb{P}_s}{\exp}(\frac{\epsilon_1 W_1(\mathbf{q}, \mathbf{b}, p_s')}{2\Delta W_1})}$


\ENDFOR

\STATE $p_s \leftarrow \mathcal{M}_1(\mathbf{q}, \mathbf{b})$ // \emph{Select selling clearing price}

\FOR{$p_g \leftarrow p_s$ to $n_{max}\cdot b_{max}$}

\STATE $W_2(\mathbf{q}, \mathbf{b}, p_s, p_g) \leftarrow W(p_s, p_g)$

\STATE $\Delta W_2 \leftarrow {n_{max}}\cdot{b_{max}} - 1$

\STATE $Pr(\mathcal{M}_2(\mathbf{q}, \mathbf{b}, p_s)=p_g) \leftarrow $ \\$\frac{{\exp}(\frac{\epsilon_2 W_2(\mathbf{q}, \mathbf{b}, p_s, p_g)}{2\Delta W_2})}{\sum_{p_g'\in\mathbb{P}_g(p_s)}{\exp}(\frac{\epsilon_2 W_2(\mathbf{q}, \mathbf{b}, p_s, p_g')}{2\Delta W_2})}$

\ENDFOR

\STATE $p_g \leftarrow \mathcal{M}_2(\mathbf{q}, \mathbf{b}, p_s)$  // \emph{Select buying clearing price}

\RETURN $p_s$ and $p_g$

\end{algorithmic}
\end{algorithm}

\textbf{(3) Auction Outcome Release}

Having determined the clearing prices $p_s$ and $p_g$, we now turn to compute the resulted auction outcome. Just using exactly the same method and the same randomness when we compute the social welfare $W(p_s, p_g)$ given $p_s$ and $p_g$ in the previous step, we can easily find all winning sellers and all winning buyer groups. And then for each winning buyer in buyer group $G_l$, it will pay $p_g/|G_l|$. Finally, the resulted social welfare is just $W(p_s, p_g)$.

\subsection{Analysis}
Now, We prove that DDSM achieves differential privacy, $\gamma$-truthfulness, individual rationality, budget balance and approximate social welfare maximization.

Theorem \ref{the:dp} states the $\epsilon$-differential privacy of DDSM.

\begin{mytheorem}\label{the:dp}
DDSM achieves $\epsilon$-differential privacy, where $\epsilon=\epsilon_1+\epsilon_2$.
\end{mytheorem}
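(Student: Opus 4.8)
The plan is to decompose DDSM into its three stages and bound the privacy loss stage by stage. The first stage, buyer group formation, constructs the conflict graph from the buyers' geographic locations and then extracts independent sets by a deterministic routine; since neither the locations nor this routine depend on the bid profile $\mathbf{b}$ or the quotation profile $\mathbf{q}$, any two neighboring profiles $\mathbf{D}$ and $\mathbf{D'}$ (differing in a single bid or quotation) produce the identical grouping $G$. Thus this stage is a data-independent preprocessing step and incurs no privacy loss, and the whole bound reduces to the price-selection stage together with the outcome-release stage.

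For the price-selection stage I would argue that $\mathcal{M}_1$ is $\epsilon_1$-differentially private and $\mathcal{M}_2$ is $\epsilon_2$-differentially private and then invoke the composition property of Lemma~\ref{the:composition}. By construction, the distribution in \eqref{eq:ps-distr} is exactly the exponential mechanism instantiated with the data-independent range $\mathbb{P}_s = [1..q_{max}]$, utility function $W_1(\mathbf{q}, \mathbf{b}, p_s)$, sensitivity $\Delta W_1$, and budget $\epsilon_1$, so the standard exponential-mechanism privacy guarantee \cite{Mcsherry2007Mechanism} yields $\epsilon_1$-differential privacy, provided $\Delta W_1 = n_{max} b_{max} - 1$ is a valid sensitivity bound. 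Similarly, for every fixed value $p$ of the selling clearing price, \eqref{eq:pg-distr} realizes the exponential mechanism with range $\mathbb{P}_g(p)$ (which depends only on $p$, not on the data), utility $W_2 = W(p, p_g)$, sensitivity $\Delta W_2$, and budget $\epsilon_2$, hence is $\epsilon_2$-differentially private; conditioning on $p_s = p$ leaves this range identical on $\mathbf{D}$ and $\mathbf{D'}$, so the (adaptive) composition of the two mechanisms makes the joint release of $(p_s, p_g)$ satisfy $(\epsilon_1 + \epsilon_2)$-differential privacy.

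The main technical step is to verify the sensitivity claim $\Delta W_1 = \Delta W_2 = n_{max} b_{max} - 1$. Fix a pair $(p_s, p_g)$. A single changed buyer bid affects only the group $G_l$ containing that buyer — it can move that group's group bid $g_l = \min_i b_l(i) \times |G_l|$ and that group's aggregate bid $\sum_i b_l(i)$ — and a single changed quotation affects only that one seller; in either case the sorted orders $\mathbb{O'}$ and $\mathbb{O''}$ are perturbed in at most one position, so the counts $k_s$, $k_g$, and $k = \min(k_s, k_g)$ each change by at most one, i.e., at most one profitable transaction is added or removed. Since the value of any one profitable transaction lies in $\{1, \ldots, n_{max} b_{max} - 1\}$, the quantity $W(p_s, p_g)$ in \eqref{eq:social-welfare} changes by at most $n_{max} b_{max} - 1$; taking the maximum over $p_g$ (a maximum of functions of sensitivity at most $n_{max} b_{max} - 1$) preserves this bound for $W_1$, and $W_2 = W$ directly. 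I expect this propagation argument — keeping track of how one changed value moves through the min-based group bids, the re-sorting, the threshold counts, and the RAND winner selection (which uses data-independent coins and so cannot amplify the change) — to be the principal obstacle, since one must rule out a second transaction's contribution changing simultaneously.

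Finally, the outcome-release stage reconstructs the winning sellers, winning buyer groups, and payments from the selected prices $(p_s, p_g)$ using the same internal randomness already fixed in the price-selection stage; treating it as post-processing of the $(\epsilon_1+\epsilon_2)$-private output, the post-processing property gives that the full mechanism is $(\epsilon_1+\epsilon_2)$-differentially private, i.e., $\epsilon$-differentially private with $\epsilon = \epsilon_1 + \epsilon_2$, as claimed. Here one should check carefully that the released winners and payments introduce no privacy loss beyond that of $(p_s,p_g)$; if desired, this can be handled by folding the winner determination into the composed mechanism and re-running the exponential-mechanism likelihood-ratio bound on the joint output. Assembling the three stages completes the proof.
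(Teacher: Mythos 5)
Your proposal is correct and follows essentially the same route as the paper: show that $\mathcal{M}_1$ (the selection via \eqref{eq:ps-distr}) is $\epsilon_1$-differentially private and $\mathcal{M}_2$ (via \eqref{eq:pg-distr}, for each fixed $p_s$) is $\epsilon_2$-differentially private as instances of the exponential mechanism, then combine them with Lemma~\ref{the:composition}. The only difference is presentational: the paper re-derives the exponential-mechanism likelihood-ratio bound explicitly instead of citing it, while you additionally spell out details the paper leaves implicit (bid-independence of the grouping, the sensitivity bound $\Delta W_1=\Delta W_2=n_{max}b_{max}-1$, and treating the outcome release as post-processing), which only strengthens the argument.
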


\begin{proof}
We denote a value profile $\mathbf{B} = \mathbf{q} \cup \mathbf{b}$ as the union of any pair of quotation and bid profiles $(\mathbf{q}, \mathbf{b})$, and denote any neighboring profile by $\mathbf{B'}$, which is only differing in a single bid or quotation value from $\mathbf{B}$. Let $\mathcal{M}_1$ and $\mathcal{M}_2$ denote the two mechanisms to randomly select $p_s$ and $p_g$, respectively. The probability ratio of selecting $p_s \in \mathbb{P}_s$ as the selling clearing price for any pair of profiles $\mathbf{B}$ and $\mathbf{B'}$ is
\begin{displaymath}
\begin{aligned}
&\frac{Pr(\mathcal{M}_1(\mathbf{B})=p_s)}{Pr(\mathcal{M}_1(\mathbf{B'})=p_s)}\\
=&\frac{\frac{{\exp}(\frac{\epsilon_1 W_1(\mathbf{B}, p_s)}{2\Delta W_1})}{\sum_{p_s'\in\mathbb{P}_s}{\exp}(\frac{\epsilon_1 W_1(\mathbf{B}, p_s')}{2\Delta W_1})}}{\frac{{\exp}(\frac{\epsilon_1 W_1(\mathbf{B'}, p_s)}{2\Delta W_1})}{\sum_{p_s'\in\mathbb{P}_s}{\exp}(\frac{\epsilon_1 W_1(\mathbf{B'}, p_s')}{2\Delta W_1})}}\\
=&(\frac{{\exp}(\frac{\epsilon_1 W_1(\mathbf{B}, p_s)}{2\Delta W_1})}{{\exp}(\frac{\epsilon_1 W_1(\mathbf{B'}, p_s)}{2\Delta W_1})})(\frac{{\sum_{p_s'\in\mathbb{P}_s}{\exp}(\frac{\epsilon_1 W_1(\mathbf{B'}, p_s')}{2\Delta W_1})}}{{\sum_{p_s'\in\mathbb{P}_s}{\exp}(\frac{\epsilon_1 W_1(\mathbf{B}, p_s')}{2\Delta W_1})}})\\
\leq& \exp(\frac{\epsilon_1}{2})(\frac{{\sum_{p_s'\in\mathbb{P}_s}\exp(\frac{\epsilon_1}{2}){\exp}(\frac{\epsilon_1 W_1(\mathbf{B}, p_s')}{2\Delta W_1})}}{{\sum_{p_s'\in\mathbb{P}_s}{\exp}(\frac{\epsilon_1 W_1(\mathbf{B}, p_s')}{2\Delta W_1})}})\\
\leq & \exp(\frac{\epsilon_1}{2})\exp(\frac{\epsilon_1}{2})(\frac{{\sum_{p_s'\in\mathbb{P}_s}{\exp}(\frac{\epsilon_1 W_1(\mathbf{B}, p_s')}{2\Delta W_1})}}{{\sum_{p_s'\in\mathbb{P}_s}{\exp}(\frac{\epsilon_1 W_1(\mathbf{B}, p_s')}{2\Delta W_1})}})\\
=& \exp(\epsilon_1)
\end{aligned}
\end{displaymath}

Symmetrically, we have $\exp(-\epsilon_1) \le \frac{Pr(\mathcal{M}_1(\mathbf{B})=p_s)}{Pr(\mathcal{M}_1(\mathbf{B'})=p_s)}$. Thus, the $\mathcal{M}_1$ achieves $\epsilon_1$-differential privacy.

Similarly, it can be proved that $M_2$ achieves $\epsilon_2$-differential privacy. Finally, by the composition lemma(Lemma \ref{the:composition}), DDSM achieves $\epsilon$-differential privacy, with $\epsilon = \epsilon_1 + \epsilon_2$. 
\end{proof}

The $\gamma$-truthfulness of the DDSM is stated in Theorem \ref{the:gamma-truthfulness}.

\begin{mytheorem}\label{the:gamma-truthfulness}
DDSM is $\gamma$-truthful, where $\gamma$ = max($\epsilon_1u_{1max}$, $\epsilon_2u_{2max}$).
\end{mytheorem}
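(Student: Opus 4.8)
The plan is to prove seller-side and buyer-side approximate truthfulness separately, reducing each to the observation behind McSherry--Talwar's analysis of the exponential mechanism: a selection rule that is $\epsilon'$-differentially private in a single agent's report, paired with a payment rule that keeps that agent's utility bounded and non-negative on every outcome reachable under \emph{truthful} reporting, is $O(\epsilon' u_{\max})$-truthful for that agent. Concretely, for a fixed agent I would write its expected utility as $\mathbb{E}[u]=\sum_o \Pr[\text{outcome}=o]\,u(o)$, compare the truthful profile against an arbitrary deviating profile term by term, and bound the change in each outcome probability by the relevant privacy budget, reusing the probability-ratio estimate already carried out in the proof of Theorem~\ref{the:dp}.

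\textbf{Sellers.} Fix seller $m$ with true value $v_m^s$ and all other reports. Its quotation $q_m$ enters DDSM only through the sorted list $\mathbb{O'}$ (hence the counts $k_s$) and through the quality functions $W_1,W_2$ driving $\mathcal{M}_1,\mathcal{M}_2$; if $m$ wins it is paid the clearing price $p_s$. The key intermediate claim, which simultaneously yields individual rationality, is that a truthful winner always satisfies $q_m=v_m^s\le p_s$ (every seller in $\text{TOP}(\mathbb{O'},k_s)$ has quotation at most $p_s$), so $m$'s realized utility is non-negative on every outcome of positive probability under the truthful profile and never exceeds $u_{1\max}$. Since $\mathcal{M}_1$ is $\epsilon_1$-differentially private in any single quotation (exactly the computation in Theorem~\ref{the:dp}), replacing $v_m^s$ by any $q_m'$ rescales every outcome probability by a factor in $[e^{-\epsilon_1},e^{\epsilon_1}]$; combined with non-negativity at the truthful profile and the bound $u_{1\max}$, this caps the gain from deviating at $(1-e^{-\epsilon_1})u_{1\max}\le \epsilon_1 u_{1\max}$.

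\textbf{Buyers.} The argument is symmetric, with one extra wrinkle from the group-bid rule $g_l=|G_l|\cdot\min_i b_l(i)$: buyer $n$'s bid affects $g_l$ — and therefore anything downstream — only when $b_n$ is, or a deviation would make it, the minimum bid of its group $G_l$; otherwise $n$'s expected utility is literally unchanged and $\gamma=0$ suffices. When $b_n$ is the group minimum we have $g_l=|G_l|\cdot b_n$, and since a winning group satisfies $g_l\ge p_g$, a truthful winning buyer pays $p_g/|G_l|\le b_n=v_n^b$, so again its utility is non-negative over the truthful support and at most $u_{2\max}$. Applying the $\epsilon_2$-differential privacy of $\mathcal{M}_2$ in a single bid caps the gain from any bid deviation at $(1-e^{-\epsilon_2})u_{2\max}\le\epsilon_2 u_{2\max}$. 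Taking the worse of the two sides gives $\gamma=\max(\epsilon_1 u_{1\max},\epsilon_2 u_{2\max})$.

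\textbf{Expected main obstacle.} The delicate point is justifying that a seller's utility loss is governed by $\epsilon_1$ alone, and a buyer's by $\epsilon_2$ alone, even though a single report perturbs \emph{both} exponential mechanisms (through $W_1$ and $W_2$) and, via the coupling $k=\min(k_s,k_g)$, an agent's win probability can in principle depend on \emph{both} clearing prices. I would attack this by decomposing a deviation's effect into ``change the $p_s$-distribution'' followed by ``change the conditional $p_g$-distribution given $p_s$'' and arguing the second (resp.\ first) change cannot benefit a seller (resp.\ buyer); absent such a refinement, falling back on the composition lemma only yields the weaker $\gamma=(\epsilon_1+\epsilon_2)\max(u_{1\max},u_{2\max})$. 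A secondary, routine subtlety is that under over- or under-reporting the realized utility can be negative, so the $e^{\pm\epsilon'}$ comparison must be anchored at the truthful-report distribution (where the individual-rationality observations above guarantee non-negativity), which is also why the sharp constant is $(1-e^{-\epsilon'})u_{\max}$ rather than literally $\epsilon' u_{\max}$.
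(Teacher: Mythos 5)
Your proposal follows essentially the same route as the paper's proof: it argues each side separately, anchored at the truthful report, using the $\epsilon_1$-differential privacy of $\mathcal{M}_1$ (resp.\ $\epsilon_2$ of $\mathcal{M}_2$) to rescale the clearing-price distribution by at most $e^{\pm\epsilon}$, the random selection $\mathbb{W}^s$, $\mathbb{W}^g$ to make an agent's win indicator independent of its own report once the prices are fixed, and individual rationality of truthful winners together with the bound $u_{1max}$ (resp.\ $u_{2max}$) to turn the multiplicative factor into an additive loss of at most $\epsilon_1 u_{1max}$ (resp.\ $\epsilon_2 u_{2max}$), with the over-/under-reporting cases treated exactly as you describe. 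The ``main obstacle'' you flag --- that a single report perturbs both $W_1$ and $W_2$ and that winning depends on both prices through $k=\min(k_s,k_g)$ --- is not handled more rigorously in the paper either: its seller analysis works only with the distribution of $p_s$ under $\mathcal{M}_1$, and its buyer analysis conditions on the already realized $p_s^1$ and invokes only $\mathcal{M}_2$, so the decomposition you sketch is precisely what the paper implicitly assumes rather than proves, and your weaker fallback bound $(\epsilon_1+\epsilon_2)\max(u_{1max},u_{2max})$ is not needed to match its claimed constant.
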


\begin{proof}
For double spectrum auctions, we need to prove the truthfulness for both sellers and buyers. We describe the proof in the two aspects as follows.

\emph{(1) Proof of $\epsilon_1u_{1max}$-truthfulness for sellers}

For sellers, the ranges of both prices and true values are $[1, q_{max}]$, so a seller's maximum utility is $u_{1max}=q_{max}-1$. We let $w_m$ denote whether a seller $m$ wins ($w_m = 1$) or not ($w_m = 0$). It is worth noting that in the winner determination, we take a random selection of $k$ sellers when the potential winning sellers are more than $k$. So no matter how a seller changes its own quotation within the range of less than the clearing price $p_s$, it has no effect on the auction outcome, $i.e, w_m(q_m, \mathbf{q_{-m}}, \mathbf{b}, p_s, p_g)=w_m(v_m^s, \mathbf{q_{-m}}, \mathbf{b}, p_s, p_g)$. We analyze the expected  utility of any seller $m$ in two cases as follows.
\begin{itemize}
    \item Case\ $q_m \geq v_m^s$: For any $p_s\in\mathbb{P}_s$, if $q_m>p_s$, the seller $m$ does not win ($i. e, w_m=0$) and his utility is 0; else $q_m\leq p_s$, the seller $m$ is randomly selected as a winner no matter its quotation is $q_m$ or $v_m^s$. Then, if the seller $m$ wins($i. e, w_m=1$), its utility is $u_m=(p_s-v_m^s)>0$, else the seller $m$ is lost($i.e, w_m=0$), and his utility is 0. So we have
        \begin{displaymath}\small
        \begin{aligned}
        \qquad &E[u_m(v_m^s, \mathbf{q_{-m}}, \mathbf{b}, \mathbb{P}_s)]\\
        =\quad&\sum_{p_s\in\mathbb{P}_s}Pr[\mathcal{M}_1(v_m^s, \mathbf{q_{-m}}, \mathbf{b})=p_s]\\
        \quad&\times w_m\cdot(p_s-v_m^s)\\
        \geq \quad &exp(-\epsilon_1)\sum_{p_s\in\mathbb{P}_s}Pr[\mathcal{M}_1(q_m, \mathbf{q_{-m}}, \mathbf{b})=p_s]\\
        \quad&\times w_m\cdot(p_s-v_m^s)\\
        =\quad&exp(-\epsilon_1)E[u_m(q_m, \mathbf{q_{-m}}, \mathbf{b}, \mathbb{P}_s)]\\
        \geq \quad &(1-\epsilon_1)E[u_m(q_m, \mathbf{q_{-m}}, \mathbf{b}, \mathbb{P}_s)]\\
        =\quad&E[u_m(q_m, \mathbf{q_{-m}}, \mathbf{b}, \mathbb{P}_s)]-\epsilon_1E[u_m(q_m, \mathbf{q_{-m}}, \mathbf{b}, \mathbb{P}_s)]\\
        \geq \quad&E[u_m(q_m, \mathbf{q_{-m}}, \mathbf{b}, \mathbb{P}_s)]-\epsilon_1u_{1max}\\
        \end{aligned}
        \end{displaymath}

    \item Case\ $q_m < v_m^s$: For any $p_s\in\mathbb{P}_s$, if $v_m^s < p_s$ and $q_m < v_m^s$, then the seller $m$ wins with a certain probability ($i. e, $ $w_m=1$) and his utility is $u_m=(p_s-v_m^s)>0$; else if $v_m^s > p_s > q_m$ and the seller $m$ is randomly selected, then the seller $m$ also wins($i. e, $ $w_m=1$), but his utility is $u_m=(p_s-v_m^s)<0$ or the seller $m$ is not a winner, his utility is 0 ; else $v_m^s > q_m >p_s$, then the seller $m$ does not win($i. e, $ $w_m=0$) and his utility is 0. So
        \begin{displaymath}\small
        \begin{aligned}
        \qquad &E[u_m(v_m^s, \mathbf{q_{-m}}, \mathbf{b}, \mathbb{P}_s)]\\
        =\quad&\sum_{p_s\in\mathbb{P}_s}Pr[\mathcal{M}_1(v_m^s, \mathbf{q_{-m}}, \mathbf{b})=p_s]\times w_m\cdot(p_s-v_m^s)\\
        \geq \quad &\sum_{(p_s\in\mathbb{P}_s)\bigwedge(p_s>v_m^s)}Pr[\mathcal{M}_1(v_m^s, \mathbf{q_{-m}}, \mathbf{b})=p_s]\\
        \quad&\times w_m\cdot(p_s-v_m^s)\\
        = \quad &\sum_{(p_s \in \mathbb{P}_s)\bigwedge(p_s>v_m^s)}Pr[\mathcal{M}_1(q_m, \mathbf{q_{-m}}, \mathbf{b})=p_s]\\
        \quad&\times w_m\cdot(p_s-v_m^s)\\
        \geq \quad &E[u_m(q_m, \mathbf{q_{-m}}, \mathbf{b},\mathbb{P}_s)]\\
        \end{aligned}
        \end{displaymath}
 \end{itemize}

 From the above, for any seller $m$, we have
\begin{equation}
 E[u_m(v_m^s, \mathbf{q_{-m}}, \mathbf{b}, \mathbb{P}_s)]\geq E[u_m(q_m, \mathbf{q_{-m}}, \mathbf{b}, \mathbb{P}_s)]-\epsilon_1u_{1max}
 \end{equation}

Thus, DDSM is $\epsilon_1u_{1max}$-truthful for sellers.\\

\emph{(2) Proof of $\epsilon_2u_{2max}$-truthfulness for buyers}

For buyers, the ranges of both prices and true values are $[1, b_{max}]$, so a buyer's maximum utility is $u_{2max}=b_{max}-1$. Employing mechanism $\mathcal{M}_2$, we can determine the buying clearing price $p_g$ for buyer groups. Note that mechanism $\mathcal{M}_2$ selecting the buying clearing price has obtained the seller clearing price expressed by $p_s^1$. Then the buying clearing price for buyers in each winning buyer group $G_l$, $p_l^b=\frac{p_g}{|G_l|}$. Similarly, let $w_n$ denote whether a buyer $n$ wins, and we have $w_n(b_n, \mathbf{q}, \mathbf{b_{-n}}, p_n^b)=w_n(v_n^b, \mathbf{q}, \mathbf{b_{-n}}, p_n^b)$ once buyer $n$ is in a potential winning buyer group due to the random selection of winning groups. For any buyer $n$, let $b_{min}$, $g_l$ denote the minimum bid and group bid, respectively, of the group $G_l$ that buyer $n$ belongs to, and we analyze buyer $n$'s utility in the following two cases.

\begin{itemize}
 \item Case\ $b_n\leq v_n^b$: For any $p_g\in\mathbb{P}_g(p_s^1)$, if $g_l < p_g$ or $g_l \geq p_g$, and the group $G_l$ is lost ($i. e, $ $w_n=0$), the utility of the buyer $n$ is 0; else $g_l \geq p_g$ and the group $l$ is randomly selected, then $b_n\geq b_{min}>p_n^b$ and $b_n\leq v_n^b$, the group of the buyer $n$ wins($i. e, w_n=1$)and his utility is $u_n=(v_n^b-p_n^b)>0$. So
     \begin{displaymath}\small
     \begin{aligned}
    \qquad &E[u_n(v_n^b, \mathbf{q}, \mathbf{b_{-n}}, p_s^1, \mathbb{P}_g(p_s^1))]\\
    =\quad&\sum_{p_g\in\mathbb{P}_g(p_s^1)}Pr[\mathcal{M}_2(v_n^b, \mathbf{q}, \mathbf{b_{-n}})=p_g]\times w_n\cdot(v_n^b-p_n^b)\\
    \geq \quad &exp(-\epsilon_2)\sum_{p_g\in\mathbb{P}_g(p_s^1)}Pr[\mathcal{M}_2(b_n, \mathbf{q}, \mathbf{b_{-n}})=p_g]\\
    \quad&\times w_n\cdot(v_n^b-p_n^b)\\
    =\quad&exp(-\epsilon_2)E[u_n(b_n, \mathbf{q}, \mathbf{b_{-n}}, p_s^1, \mathbb{P}_g(p_s^1))]\\
    \geq \quad &(1-\epsilon_2)E[u_n(b_n, \mathbf{q}, \mathbf{b_{-n}}, p_s^1, \mathbb{P}_g(p_s^1))]\\
    =\quad&E[u_n(b_n, \mathbf{q}, \mathbf{b_{-n}}, p_s^1, \mathbb{P}_g(p_s^1))]\\
    \quad&-\epsilon_2E[u_n(b_n, \mathbf{q}, \mathbf{b_{-n}}, p_s^1, \mathbb{P}_g(p_s^1))]\\
    \geq \quad&E[u_n(b_n, \mathbf{q}, \mathbf{b_{-n}}, p_s^1, \mathbb{P}_g(p_s^1))]-\epsilon_2u_{2max}\\
    \end{aligned}
    \end{displaymath}

\item Case\ $b_n>v_n^b$: For any $p_g\in\mathbb{P}_g(p_s^1)$, if $b_n<p_n^b$ or $b_n>p_n^b>b_{min}$, the buyer $n$ does not win($i. e, $ $w_n=0$)and his utility is 0; else if $g_l \geq p_g$ and the group $l$ is selected with a certain probability, $b_n\geq b_{min}>p_n^b$ and $v_n^b>p_n^b$, then the buyer $n$ wins($i. e, $ $w_n=1$) and his utility is $u_n=(v_n^b-p_n^b)>0$; if $b_n\geq b_{min}>p_n^b$ and $v_n^b<p_n^b$, then the buyer $n$ wins($i. e, $ $w_n=1$), but his utility is $u_n=(v_n^b-p_n^b)<0$. So
    \begin{displaymath}\small
    \begin{aligned}
    \qquad &E[u_n(v_n^b, \mathbf{q}, \mathbf{b_{-n}}, p_s^1, \mathbb{P}_g(p_s^1))]\\
    \geq \quad &\sum_{(p_g\in\mathbb{P}_g(p_s^1))\bigwedge(p_n^b<v_n^b)}Pr[\mathcal{M}_2(v_n^b, \mathbf{q}, \mathbf{b_{-n}})=p_g]\\
    \quad& \times w_n\cdot(v_n^b-p_n^b)\\
    = \quad &\sum_{(p_g\in\mathbb{P}_g(p_s^1))\bigwedge(p_n^b<v_n^b)}Pr[\mathcal{M}_2(b_n, \mathbf{q}, \mathbf{b_{-n}})=p_g]\\
    \quad& \times w_n\cdot(v_n^b-p_n^b)\\
    \geq \quad &E[u_n(b_n, \mathbf{q}, \mathbf{b_{-n}}, p_s^1, \mathbb{P}_g(p_s^1))]\\
    \end{aligned}
    \end{displaymath}
\end{itemize}

From the above, for any buyer $n$, we have
\begin{equation}
\begin{aligned}
\quad &E[u_n(v_n^b, \mathbf{q}, \mathbf{b_{-n}}, p_s^1, \mathbb{P}_g(p_s^1))]\\
\geq \quad& E[u_n(b_n, \mathbf{q}, \mathbf{b_{-n}}, p_s^1, \mathbb{P}_g(p_s^1))]-\epsilon_2u_{2max}
\end{aligned}
\end{equation}

Thus, DDSM is $\epsilon_2u_{2max}$-truthful for buyers.
Let $\gamma=\max(\epsilon_1u_{1max}, \epsilon_2u_{2max})$. We therefore have proved that DDSM is $\gamma$-truthful. 
\end{proof}

It can be also proved that DDSM approximately maximizes the social welfare as shown in Theorem \ref{the:social-welfare}.

\begin{mytheorem}\label{the:social-welfare}
DDSM achieves approximate social welfare maximization.
\end{mytheorem}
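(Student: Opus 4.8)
The natural benchmark for ``approximate social welfare maximization'' here is the largest social welfare attainable by a uniform double-clearing scheme under the fixed, bid-independent buyer grouping $G$, i.e.\ $\mathrm{OPT}=\max_{p_s\in\mathbb{P}_s,\ p_g\in\mathbb{P}_g(p_s)}W(p_s,p_g)$; by the definition of the first-stage utility function one has $\max_{p_s\in\mathbb{P}_s}W_1(\mathbf{q},\mathbf{b},p_s)=\mathrm{OPT}$. The plan is to invoke the standard accuracy guarantee of the exponential mechanism (McSherry--Talwar \cite{Mcsherry2007Mechanism}) once for each of the two selection stages and then combine the two deviation bounds by a union bound. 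For the first stage, $\mathcal{M}_1$ is the exponential mechanism with utility $W_1$, sensitivity $\Delta W_1=n_{\max}b_{\max}-1$, output range $\mathbb{P}_s$ of size $q_{\max}$, and budget $\epsilon_1$, so that for every $t>0$
\[ \Pr\!\left[W_1(\mathbf{q},\mathbf{b},p_s)\le \mathrm{OPT}-\frac{2\Delta W_1}{\epsilon_1}\bigl(\ln q_{\max}+t\bigr)\right]\le e^{-t}. \]

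\textbf{Second stage and combination.} Conditioning on the selling clearing price $p_s$ returned by $\mathcal{M}_1$, the mechanism $\mathcal{M}_2$ is the exponential mechanism with utility $W_2(\mathbf{q},\mathbf{b},p_s,\cdot)=W(p_s,\cdot)$, sensitivity $\Delta W_2=n_{\max}b_{\max}-1$, range $\mathbb{P}_g(p_s)$ of size at most $n_{\max}b_{\max}$, and budget $\epsilon_2$; since $\max_{p_g}W(p_s,p_g)=W_1(\mathbf{q},\mathbf{b},p_s)$, for every $t>0$
\[ \Pr\!\left[W(p_s,p_g)\le W_1(\mathbf{q},\mathbf{b},p_s)-\frac{2\Delta W_2}{\epsilon_2}\bigl(\ln(n_{\max}b_{\max})+t\bigr)\right]\le e^{-t}. \]
Because the Auction Outcome Release step releases exactly $W(p_s,p_g)$ for the selected pair, a union bound over the two bad events gives, with probability at least $1-2e^{-t}$,
\[ W_{\mathrm{DDSM}}\ \ge\ \mathrm{OPT}-\frac{2\Delta W_1}{\epsilon_1}\bigl(\ln q_{\max}+t\bigr)-\frac{2\Delta W_2}{\epsilon_2}\bigl(\ln(n_{\max}b_{\max})+t\bigr), \]
and integrating this tail bound yields the in-expectation statement $\mathbb{E}[W_{\mathrm{DDSM}}]\ge \mathrm{OPT}-\tfrac{2\Delta W_1}{\epsilon_1}(\ln q_{\max}+1)-\tfrac{2\Delta W_2}{\epsilon_2}(\ln(n_{\max}b_{\max})+1)$: the additive loss is only logarithmic in the price-range sizes and inversely proportional to the privacy budgets, which is the precise sense in which DDSM approximately maximizes social welfare.

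\textbf{Main obstacle.} The delicate point is the internal RAND step: when more than $k$ potential winning sellers (or buyer groups) exist, DDSM selects the $k$ actual winners uniformly at random, so $W(p_s,p_g)$ is itself a random variable rather than a deterministic function of $(p_s,p_g)$, whereas the exponential-mechanism accuracy bound is stated for a deterministic utility. To make the argument rigorous I would either (i) reinterpret $W_1$ and $W_2$ as the expectations of $W(p_s,\cdot)$ over the RAND coins --- the two stages remain $\epsilon_1$- and $\epsilon_2$-differentially private and the expected released welfare equals $\mathbb{E}[W_2]$ --- or (ii) carry out the whole argument conditionally on those coins, where it goes through verbatim, and average at the end. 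A secondary point worth stating explicitly is that $\mathrm{OPT}$ is the optimum of the clearing-price family under the given grouping $G$, not the globally optimal interference-free allocation (whose exact computation is NP-hard), so this is the appropriate comparison.
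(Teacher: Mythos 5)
Your proof is correct and takes essentially the same route as the paper's: a per-stage accuracy bound for each of the two exponential mechanisms, composed additively through the identity $W_1(\mathbf{q},\mathbf{b},p_s)=\max_{p_g}W(p_s,p_g)$ --- the paper merely re-derives the tail bound from scratch via the sets $R_{t_i}$ and $\overline{R_{2t_i}}$ and converts it to an expectation bound by choosing $t_i=\frac{2C\ln(e+\epsilon_i|\mathbb{P}|\,OPT/2C)}{\epsilon_i}$ (yielding losses $3t_1+3t_2$), whereas you cite the standard McSherry--Talwar accuracy guarantee and integrate the tail, giving an equivalent logarithmic-in-range, inverse-in-$\epsilon_i$ additive loss. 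Your explicit remarks on the RAND-induced randomness of $W(p_s,p_g)$ and on the benchmark being the best clearing-price pair under the fixed grouping are refinements the paper passes over silently, not a different argument.
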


\begin{proof}
We first analyze the expected utilities resulted from mechanism $\mathcal{M}_1$ selecting selling clearing price $p_s$ and from mechanism $\mathcal{M}_2$ selecting buying clearing price $p_g$, and then analyze the expected utility (social welfare) of the overall mechanism of DDSM.

Let $\mathbf{B} = \mathbf{q} \cup \mathbf{b}$, and $C=n_{max}\cdot b_{max}-1$. Let $OPT_1$ = $max_{p\in\mathbb{P}_s}$ $W_1(\mathbf{B}, p)$ denote the maximum $W_1$ for mechanism $\mathcal{M}_1$. Define the sets $R_{t_1}=\{p \in \mathbb{P}_s:W_1(\mathbf{B}, p)>OPT_1-t_1\}$ and $\overline{R_{2t_1}}=\{p \in \mathbb{P}_s: W_1(\mathbf{B}, p)\leq OPT_1-2t_1\}$ for a small constant $t_1>0$.
We have
\begin{displaymath}
\begin{aligned}
Pr(\mathcal{M}_1(\mathbf{B}) \in \overline{R_{2t_1}})&\leq \frac{Pr(\mathcal{M}_1(\mathbf{B}) \in \overline{R_{2t_1}})}{Pr(\mathcal{M}_1(\mathbf{B}) \in R_{t_1})}\\
&=\frac{\sum_{p\in \overline{R_{2t_1}}}\frac{{exp}(\frac{\epsilon_1 W_1(\mathbf{B}, p)}{2C})}{\sum_{p'\in\mathbb{P}_s}{exp}(\frac{\epsilon_1 W_1(\mathbf{B}, p')}{2C})}}{\sum_{p\in R_{t_1}}\frac{{exp}(\frac{\epsilon_1 W_1(\mathbf{B}, p)}{2C})}{\sum_{p'\in\mathbb{P}_s}{exp}(\frac{\epsilon_1 W_1(\mathbf{B}, p')}{2C})}}\\
&=\frac{\sum_{p\in \overline{R_{2t_1}}}{{exp}(\frac{\epsilon_1 W_1(\mathbf{B}, p)}{2C})}}{\sum_{p\in R_{t_1}}{exp}(\frac{\epsilon_1 W_1(\mathbf{B}, p)}{2C})}\\
&\leq \frac{|\overline{R_{2t_1}}|{exp}(\frac{\epsilon_1(OPT_1-2t_1)}{2C})}{|R_{t_1}|{exp}(\frac{\epsilon_1(OPT_1-t_1)}{2C})}\\
&=\frac{|\overline{R_{2t_1}}|}{|R_{t_1}|}{exp}(\frac{-\epsilon_1t_1}{2C})\\
&\leq |\mathbb{P}_s|{exp}(\frac{-\epsilon_1t_1}{2C})\\
\end{aligned}
\end{displaymath}

Then, $Pr(\mathcal{M}_1(\mathbf{B})\in R_{2t_1})\geq 1-|\mathbb{P}_s|{exp}(\frac{-\epsilon_1t_1}{2C})$. If we let $t_1\geq \frac{2Cln(\frac{|\mathbb{P}_s|OPT_1}{t_1})}{\epsilon_1}$, we get $Pr(\mathcal{M}_1(\mathbf{B}) \in R_{2t_1})\geq 1-\frac{t_1}{OPT_1}$.

Therefore, for any $t_1\geq \frac{2Cln(\frac{|\mathbb{P}_s|OPT_1}{t_1})}{\epsilon_1}$, we have
\begin{equation}\small\label{eq:expected-w1}
\begin{aligned}
E_{p \in \mathbb{P}_s}[W_1(\mathbf{B}, p)]&>\sum_{p\in R_{2t_1}}W_1(\mathbf{B}, p)Pr(\mathcal{M}_1(\mathbf{B})=p)\\
&>(OPT_1-2t_1)(1-\frac{t_1}{OPT_1})\\
&>OPT_1-3t_1
\end{aligned}
\end{equation}

If we let $t_1=\frac{2Cln(e+\frac{\epsilon_1|\mathbb{P}_s|OPT_1}{2C})}{\epsilon_1}\geq \frac{2C}{\epsilon_1}$, and we have
\begin{displaymath}\small
\begin{aligned}
\frac{2Cln(\frac{|\mathbb{P}_s|OPT_1}{t_1})}{\epsilon_1}&\leq \frac{2Cln(e+\frac{\epsilon_1|\mathbb{P}_s|OPT_1}{2C})}{\epsilon_1}\\
&=t_1
\end{aligned}
\end{displaymath}

Then put $t_1$ into Eq.~\eqref{eq:expected-w1}, we get the expected utility of mechanism $\mathcal{M}_1$
\begin{displaymath}
\begin{aligned}
E_{p \in \mathbb{P}_s}[W_1(\mathbf{B}, p)]&>OPT_1-3t_1\\
&>OPT_1-3\frac{2ln(e+\frac{\epsilon_1|\mathbb{P}_s|OPT_1}{2C})}{\epsilon_1}
\end{aligned}
\end{displaymath}

Now we analyze the expected utility of mechanism $\mathcal{M}_2$ given a selling clearing price $p_s$. Let the optimal utility of $\mathcal{M}_2$ be $OPT_2=\max_{p\in \mathbb{P}_g(p_s)}W_2(\mathbf{B}, p_s, p)$. Similarly, define the sets $R_{t_2}=\{p \in \mathbb{P}_g(p_s): W_2(\mathbf{B}, p_s, p)>OPT_2-t_2\}$ and $\overline{R_{2t_2}}=\{p  \in \mathbb{P}_g(p_s): W_2(\mathbf{B}, p_s, p)\leq OPT_2-2t_2\}$ for a small constant $t_2>0$. Then, we have
\begin{displaymath}
Pr(\mathcal{M}_2(\mathbf{B}) \in \overline{R_{2t_2}})\leq |\mathbb{P}_g(p_s)|{exp}(\frac{-\epsilon_2t_2}{2C})
\end{displaymath}

For $t_2\geq \frac{2Cln(\frac{|\mathbb{P}_g(p_s)|OPT_2}{t_2})}{\epsilon_2}$,
\begin{displaymath}
E_{p \in \mathbb{P}_g(p_s)}[W_2(\mathbf{B}, p_s, p)] > OPT_2-3t_2
\end{displaymath}

Let $t_2=\frac{2Cln(e+\frac{\epsilon_2|\mathbb{P}_g(p_s)|OPT_2}{2C})}{\epsilon_2}\geq \frac{2C}{\epsilon_2}$, we get the expected utility of mechanism $\mathcal{M}_2$ given $p_s$
\begin{displaymath}
\begin{aligned}
E_{p \in \mathbb{P}_g(p_s)}[W_2(\mathbf{B}, p_s, p)] > OPT_2-3\frac{2ln(e+\frac{\epsilon_2|\mathbb{P}_g(p_s)|OPT_2}{2C})}{\epsilon_2}
\end{aligned}
\end{displaymath}

In term of the definition of $W_1$, we have $W_1(\mathbf{B}, p_s)=\max_{p\in \mathbb{P}_g(p_s)}W_2(\mathbf{B}, p_s, p)=OPT_2(p_s)$. Then, we obtain the expected utility (social welfare) of our mechanism as follows.
\begin{displaymath}
\begin{aligned}
&E_{p_1 \in \mathbb{P}_s}E_{p_2 \in \mathbb{P}_g(p_1)}[W_2(\mathbf{B}, p_1, p_2)]\\
>&E_{p_1 \in \mathbb{P}_s}[OPT_2(p_s) - 3\frac{2ln(e+\frac{\epsilon_2|\mathbb{P}_g(p_s)|OPT_2}{2C})}{\epsilon_2}]\\
>&E_{p_1 \in \mathbb{P}_s}[OPT_2(p_s)] - 3\frac{2ln(e+\frac{\epsilon_2|\mathbb{P}_g(1)|OPT_1}{2C})}{\epsilon_2}\\
>&OPT_1-3(\frac{2C ln(e+\frac{\epsilon_1|\mathbb{P}_s|OPT_1}{2C})}{\epsilon_1}\\
&+\frac{2C ln(e+\frac{\epsilon_2|\mathbb{P}_g(1)|OPT_1}{2C})}{\epsilon_2})\\
=&OPT_1-6(\frac{C ln(e+\frac{\epsilon_1|\mathbb{P}_s|OPT_1}{2C})}{\epsilon_1}\\
&+\frac{C ln(e+\frac{\epsilon_2|\mathbb{P}_g(1)|OPT_1}{2C})}{\epsilon_2})
\end{aligned}
\end{displaymath}
where $|\mathbb{P}_g(1)| = \max_{p \in \mathbb{P}_s} |\mathbb{P}_g(p)|$, and $OPT_1 \ge OPT_2$.

Therefore, DDSM achieves approximate social welfare maximization. 
\end{proof}
From its design, obviously DDSM also achieves other economic properties such as individual rationality and budget balance. Here, we just present the related theorem without proving it.

\begin{mytheorem}
DDSM is of individual rationality and budget balance.
\end{mytheorem}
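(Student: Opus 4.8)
The plan is to obtain both properties directly from the structure of the winner-determination and pricing rules, observing that neither property depends on the randomness introduced by the two exponential mechanisms or by the $\text{RAND}$ subroutine. I would fix an arbitrary realization of the clearing prices $p_s, p_g$ (recalling that $p_s \le p_g$ always holds, since $p_g$ is drawn from $\mathbb{P}_g(p_s) = [p_s..n_{max}\cdot b_{max}]$) together with the internal random choices, prove the two claims pointwise for that realization, and then conclude that, holding for every realization, they hold deterministically.

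\textbf{Individual rationality.} On the seller side, every winning seller lies in $\text{TOP}(\mathbb{O'}, k_s)$, which by the definition $k_s = \arg\max_m\{q_m \le p_s\}$ contains only sellers whose quotation is at most $p_s$. Since DDSM is a uniform-price auction that pays each winning seller the selling clearing price (the rule stated in the design rationale), the payment satisfies $p_m^s = p_s \ge q_m$, so a truthful seller has utility $u_m = p_s - v_m^s \ge 0$. On the buyer side, every winning buyer $n$ belongs to some winning group $G_l \in \text{TOP}(\mathbb{O''}, k_g)$, hence $g_l \ge p_g$; by $g_l = \min_{1 \le i \le |G_l|} b_l(i) \cdot |G_l|$ this gives $\min_i b_l(i) \ge p_g/|G_l|$, and in particular $b_n \ge p_g/|G_l| = p_n^b$, which is exactly the individual-rationality condition for buyers.

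\textbf{Budget balance.} I would tally the auctioneer's ledger for the fixed realization. Let $k = \min(k_s, k_g)$ be the number of winning transactions. The auctioneer pays each of the $k$ winning sellers the price $p_s$, for a total outlay of $k \cdot p_s$. On the revenue side, each winning buyer group $G_l$ contributes $\sum_{n \in G_l} p_g/|G_l| = p_g$, and there are $k$ winning groups, so the auctioneer collects $k \cdot p_g$. Since $p_g \ge p_s$, the surplus $k(p_g - p_s) \ge 0$ is nonnegative, so the auctioneer never runs a deficit, establishing (weak) budget balance.

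I do not expect a substantive obstacle: the only point requiring care is to make explicit that the random choices made by $\text{RAND}$ among the potential winners are irrelevant here, because \emph{every} member of $\text{TOP}(\mathbb{O'}, k_s)$ already satisfies $q_m \le p_s$ and every member of $\text{TOP}(\mathbb{O''}, k_g)$ already satisfies $g_l \ge p_g$, so whichever $k$ of them are selected the above inequalities are preserved. One should also dispose of the boundary case $k = 0$, in which there are no transactions and both properties hold vacuously.
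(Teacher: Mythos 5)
Your proof is correct. The paper itself states this theorem without proof, remarking only that both properties follow ``obviously'' from the design, and your argument is precisely the design-based reasoning being alluded to: every winning seller satisfies $q_m \le p_s$ and is paid the uniform price $p_s$, every winning buyer pays $p_g/|G_l| \le \min_i b_l(i) \le b_n$ because its group satisfies $g_l \ge p_g$, and the auctioneer's balance $k(p_g - p_s) \ge 0$ follows from $p_g \in \mathbb{P}_g(p_s) = [p_s..n_{max}\cdot b_{max}]$. Your explicit handling of the irrelevance of the RAND selection and of the degenerate case $k=0$ is a welcome touch of rigor beyond what the paper records.
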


\section{Improvement of DDSM}\label{sec:improvement}
In the previous section, we introduced our basic mechanism with differential privacy, $\gamma$-truthfulness and approximate social welfare maximization for double spectrum auctions. In this section, we improve the basic mechanism in three aspects, namely auction algorithm design, utility function design, and grouping algorithm extension, enhancing the utility and applicability of our mechanism.

\subsection{Improved Mechanism}

Our main observation is that, the basic mechanism first randomly selects the selling clearing price $p_s$ and then the buying clearing price $p_g$, by employing twice the exponential mechanism, and both selections add roughly the same noise to the mechanism. Then, could we just select a clearing price pair $(p_s, p_g)$ from a properly defined set by employing the exponential mechanism, adding roughly the same noise to the mechanism only once? If so, we can greatly improve the utility of the auction outcome without degrading the privacy level. Following this line of thinking, we design our improved mechanism as follows.

\subsubsection{Design Detail}
\noindent

\textbf{(1) Buyer Group Formation:}
Exactly the same as that of our basic mechanism.

\textbf{(2) Double-sided Price Selection:}
The essential difference of this step from our basic mechanism is that, we view any two clearing prices $p_s$ and $p_g$ as a combination, and randomly select them out together at a time. Here, we mainly state what is different, and leave other things untouched. That is, all other operations that are not detailed, for instance, the group bid computation, and quotation and bid sorting, are just the same as that of our basic mechanism.

In order to select a pair of prices $p_s$ and $p_g$ at a time with the exponential mechanism, we first need to define the output set properly. To ensure that each transaction is profitable, it is necessary that $p_s \le p_g$. Thus, we can define the output set $\mathbb{P}=\{(p_s, p_g)|p_s\in \mathbb{P}_s \wedge p_g\in \mathbb{P}_g(p_s)\}$, where $\mathbb{P}_s=[1..q_{max}]$ and $\mathbb{P}_g(p_s)=[p_s..n_{max}\cdot b_{max}]$.

Next, we define the utility function of the exponential mechanism as $W(\mathbf{q}, \mathbf{b}, p_s, p_g)$, the social welfare resulted from clearing prices $p_s$ and $p_g$. Note that here $W$ can be computed by Eq.~\eqref{eq:social-welfare} just as in our basic mechanism. The sensitivity is $\Delta W = {n_{max}}\cdot{b_{max}} - 1$.

Then, we define a mechanism $\mathcal{M}$ chooses each price pair $(p_s, p_g)$ with a probability exponentially proportional to its corresponding the social welfare:
\begin{displaymath}
Pr[\mathcal{M}(\mathbf{q}, \mathbf{b})=(p_s, p_g)]=\frac{{exp}(\frac{\epsilon W(\mathbf{q}, \mathbf{b}, (p_s, p_g))}{2\Delta W})}{\sum_{(p_s, p_g)'\in\mathbb{P}}{exp}(\frac{\epsilon W(\mathbf{q}, \mathbf{b}, (p_s, p_g)')}{2\Delta W})}
\end{displaymath}
where $\epsilon$ is the privacy budget.

The improved price selection is depicted in Algorithm ~\ref{protocol:price-selection2}.

\renewcommand{\algorithmicrequire}{\textbf{Input:}}  
\renewcommand{\algorithmicensure}{\textbf{Output:}}  

\begin{algorithm}[htb]
\caption{Improved Double-sided Price Selection} \label{protocol:price-selection2}
\begin{algorithmic}[1]

\REQUIRE The set of sellers $\mathbb{M}$, the sellers' quotation profile $\mathbf{q}$, the set of buyers $\mathbb{N}$, the buyers' bid profile $\mathbf{b}$, the buyer grouping $G$, and privacy budgets $\epsilon$.

\ENSURE Selling and buying clearing prices $p_s$ and $p_g$

\vspace{0.5\baselineskip}

Steps (1), (2) and (3) are exactly he same as Algorithm \ref{protocol:price-selection}

(4) \underline{Compute probability distributions and select prices:}

\STATE $\Delta W \leftarrow {n_{max}}\cdot{b_{max}} - 1$

\FOR{$p_s \leftarrow 1$ to $q_{max}$}

\FOR{$p_g \leftarrow p_s$ to $n_{max} \cdot b_{max}$}

\vspace{0.5\baselineskip}

\STATE $Pr[\mathcal{M}(\mathbf{q}, \mathbf{b})=(p_s, p_g)] \leftarrow $ \\$\frac{{exp}(\frac{\epsilon W(\mathbf{q}, \mathbf{b}, (p_s, p_g))}{2\Delta W})}{\sum_{(p_s, p_g)'\in\mathbb{P}}{exp}(\frac{\epsilon W(\mathbf{q}, \mathbf{b}, (p_s, p_g)')}{2\Delta W})}$

\vspace{0.5\baselineskip}

\ENDFOR

\ENDFOR

\STATE $(p_s, p_g) \leftarrow \mathcal{M}(\mathbf{q}, \mathbf{b})$ // \emph{Select clearing price pair}

\RETURN $p_s$ and $p_g$

\end{algorithmic}
\end{algorithm}

\textbf{(3) Auction Outcome Release:}
Given a price pair $(p_s, p_g)$, this step is exactly the same as that of our basic mechanism.

\subsubsection{Analysis}
\noindent

We mainly show that the improve mechanism achieves differential privacy, $\gamma$-truthfulness, and approximate social welfare maximization. Note that the properties of individual rational and budget balance hold similarly for the improved mechanism, but we omit the proofs due to the obviousness.

\begin{mytheorem}\label{the:imp-dp}
The improved mechanism achieves $\epsilon$-differential privacy.
\end{mytheorem}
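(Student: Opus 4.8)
The plan is to show that the improved mechanism is essentially a direct application of a single exponential mechanism, and therefore its privacy guarantee follows immediately from the standard exponential-mechanism analysis (which is exactly the computation already carried out in the proof of Theorem~\ref{the:dp}, but now applied once rather than twice). The key observation is that steps (1) and (3) of the improved mechanism are deterministic functions of the randomness fixed during step (2): the buyer grouping $G$ is deterministic and independent of the bids, and the winner determination and outcome release simply reuse the same random choices made when computing $W(p_s,p_g)$. Hence the only place where the data profile $\mathbf{B}=\mathbf{q}\cup\mathbf{b}$ enters the randomized output is through the price-pair selection $\mathcal{M}(\mathbf{q},\mathbf{b})$, and it suffices to prove that this selection alone is $\epsilon$-differentially private.

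First I would fix an arbitrary output pair $(p_s,p_g)\in\mathbb{P}$ and any two neighboring profiles $\mathbf{B},\mathbf{B'}$ differing in a single bid or quotation. Then I would bound the ratio $\dfrac{Pr[\mathcal{M}(\mathbf{B})=(p_s,p_g)]}{Pr[\mathcal{M}(\mathbf{B'})=(p_s,p_g)]}$ by splitting it, exactly as in Theorem~\ref{the:dp}, into the ratio of numerators (the $\exp(\epsilon W/2\Delta W)$ terms) times the ratio of normalizing sums. Using the sensitivity bound $|W(\mathbf{B},p_s,p_g)-W(\mathbf{B'},p_s,p_g)|\le\Delta W = n_{max}\cdot b_{max}-1$ for every price pair, the numerator ratio is at most $\exp(\epsilon/2)$, and pulling the same factor out of every term of the denominator sum gives another $\exp(\epsilon/2)$, for a total of $\exp(\epsilon)$; the symmetric argument gives the lower bound $\exp(-\epsilon)$. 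Summing over any measurable subset $S\subseteq\mathbb{P}$ then yields $Pr[\mathcal{M}(\mathbf{B})\in S]\le\exp(\epsilon)\,Pr[\mathcal{M}(\mathbf{B'})\in S]$.

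Finally I would invoke a post-processing argument: since the full mechanism is $\Phi\circ\mathcal{M}$ for a (randomized but data-independent) map $\Phi$ that takes the selected price pair to the allocation and payments, differential privacy is preserved, so the improved mechanism is $\epsilon$-differentially private. The one subtlety worth a sentence of care is that the randomness used for the $\text{RAND}(\cdot)$ tie-breaking in step (2) and reused in step (3) must be independent of $\mathbf{B}$ (it depends only on the cardinalities $k_s,k_g,k$, which are themselves deterministic once the sorted orders and the chosen prices are fixed); granting this, there is no obstacle, and the main point is simply that, unlike the basic mechanism, no composition lemma is needed because a single exponential mechanism is used. This is also precisely why the improved mechanism achieves better utility at the same privacy budget.
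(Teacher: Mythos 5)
Your proposal is correct and matches the paper's intent exactly: the paper omits the proof of Theorem~\ref{the:imp-dp}, stating only that the improved mechanism ``is simply an application of the exponential mechanism,'' and your argument spells out precisely that claim by repeating the ratio bound from Theorem~\ref{the:dp} once for the single price-pair selection (with the same sensitivity $\Delta W = n_{max}\cdot b_{max}-1$) and noting that grouping and outcome release are data-independent post-processing. No further comparison is needed, since there is no alternative route in the paper to contrast with.
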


The proof of Theorem \ref{the:imp-dp} is omitted. The improved mechanism is simply an application of the exponential mechanism, and thus it is certainly $\epsilon$-differential privacy.

\begin{mytheorem}\label{the:imp-gamma-truthfulness}
The improved mechanism is $\gamma$-truthful, where $\gamma=max(\epsilon u_{1max}, \epsilon u_{2max})$.
\end{mytheorem}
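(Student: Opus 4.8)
The plan is to adapt the proof of Theorem~\ref{the:gamma-truthfulness} to the single exponential mechanism $\mathcal{M}$ that now draws the whole pair $(p_s,p_g)$ at once with privacy budget $\epsilon$. Two ingredients are needed. First, by Theorem~\ref{the:imp-dp} the mechanism $\mathcal{M}$ is $\epsilon$-differentially private, so for any two value profiles $\mathbf{B}$ and $\mathbf{B'}$ that differ in a single quotation or bid and for every $(p_s,p_g)\in\mathbb{P}$ we have $e^{-\epsilon}\le Pr[\mathcal{M}(\mathbf{B})=(p_s,p_g)]/Pr[\mathcal{M}(\mathbf{B'})=(p_s,p_g)]\le e^{\epsilon}$. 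Second, since the winner determination of Step~(3) is unchanged and uses $\mathrm{RAND}$, once the clearing prices are fixed a bidder's expected winning indicator, averaged over the coins of $\mathrm{RAND}$, does not depend on its own reported value as long as that value stays on the winning side of the relevant threshold: seller $m$ is a potential winner iff its quotation is $\le p_s$, and is then picked with probability $k/k_s$ insensitive to the exact quotation, while buyer $n\in G_l$ wins iff the group bid $g_l\ge p_g$ and $G_l$ is picked, at the per-buyer price $p_g/|G_l|$.

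First I would treat the sellers. Fix $m$ with true value $v_m^s$ and an arbitrary deviation $q_m\ne v_m^s$; write $\bar w_m(p_s,p_g)$ for $m$'s expected winning indicator, recall $u_m=p_s-v_m^s$ on winning and $0$ otherwise, and set $u_{1max}=q_{max}-1$. If $q_m\ge v_m^s$, every nonzero term of $E[u_m(v_m^s,\mathbf{q_{-m}},\mathbf{b})]$ has $p_s\ge v_m^s$ and hence a nonnegative value, so restricting the sum to pairs with $p_s\ge q_m$ only decreases it; on those pairs $\bar w_m$ is unchanged whether $m$ reports $v_m^s$ or $q_m$ (both lie below $p_s$, so $k_s$, $k$, and $\mathrm{TOP}(\mathbb{O'},k_s)$ are unchanged), the differential-privacy bound replaces $Pr[\mathcal{M}(v_m^s,\cdot)=(p_s,p_g)]$ by $e^{-\epsilon}Pr[\mathcal{M}(q_m,\cdot)=(p_s,p_g)]$, and since the deviation term vanishes for $p_s<q_m$ the sum collapses to $e^{-\epsilon}E[u_m(q_m,\cdot)]$. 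If $q_m<v_m^s$, I would instead discard from $E[u_m(q_m,\cdot)]$ the nonpositive terms with $q_m\le p_s\le v_m^s$ and run the same manipulation on the pairs with $p_s>v_m^s$, where $\bar w_m$ is again report-independent, reaching the same inequality $E[u_m(v_m^s,\cdot)]\ge e^{-\epsilon}E[u_m(q_m,\cdot)]$. Using $E[u_m(q_m,\cdot)]\le u_{1max}$ and $e^{-\epsilon}\ge 1-\epsilon$, this gives $E[u_m(v_m^s,\cdot)]\ge E[u_m(q_m,\cdot)]-\epsilon u_{1max}$, i.e.\ $\epsilon u_{1max}$-truthfulness for sellers.

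The buyer side is obtained by repeating verbatim the two-case analysis ($b_n\le v_n^b$ and $b_n>v_n^b$) in the proof of Theorem~\ref{the:gamma-truthfulness}, now with the threshold being the group bid $g_l$ versus $p_g$ and the per-buyer price $p_g/|G_l|$, and with the $\epsilon$-DP of $\mathcal{M}$ playing the role of the $\epsilon_2$-DP of $\mathcal{M}_2$; this yields $E[u_n(v_n^b,\mathbf{q},\mathbf{b_{-n}})]\ge E[u_n(b_n,\mathbf{q},\mathbf{b_{-n}})]-\epsilon u_{2max}$ with $u_{2max}=b_{max}-1$. Combining the two sides, the improved mechanism is $\gamma$-truthful with $\gamma=\max(\epsilon u_{1max},\epsilon u_{2max})$. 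The main obstacle I anticipate is the bookkeeping that the conditioned winning probability $\bar w_m$ (and its buyer analogue) is genuinely unchanged by the deviation over the relevant price range: the utility score $W$ itself does move with a bidder's report -- which is exactly why truthfulness is only approximate -- and it is precisely the single $\epsilon$-DP guarantee of the exponential mechanism that absorbs this movement, which is also why both the seller and buyer bounds degrade to the full budget $\epsilon$ rather than to the split budgets $\epsilon_1,\epsilon_2$ of the basic mechanism.
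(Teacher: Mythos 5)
Your proposal is correct and follows essentially the same route as the paper: the paper's proof is itself just an adaptation of the two-case seller/buyer analysis from Theorem~\ref{the:gamma-truthfulness}, rewriting the expected utilities as sums over price pairs $(p_s,p_g)$ weighted by $Pr[\mathcal{M}(\mathbf{q},\mathbf{b})=(p_s,p_g)]$ and invoking the single $\epsilon$-DP bound (instead of the split $\epsilon_1,\epsilon_2$) together with $e^{-\epsilon}\ge 1-\epsilon$ and the report-independence of the RAND-based winner selection. Your write-up simply fills in these steps, somewhat more carefully than the paper's sketch, and arrives at the same bounds $\epsilon u_{1max}$ and $\epsilon u_{2max}$.
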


\begin{proof}
The proof of Theorem \ref{the:imp-gamma-truthfulness} is similar to that of Theorem \ref{the:gamma-truthfulness}. Here, we only point out some differences, but will not go into detail about the entire proof.

Since the improved mechanism chooses both selling and buying clearing prices at one time, the expected utility of a seller $m$ should become
\begin{displaymath}
\begin{aligned}
&E[u_m(\mathbf{q}, \mathbf{b}, \mathbb{P}_s, \mathbb{P}_g)]\\
=&\sum_{p_1\in\mathbb{P}_s}(\sum_{p_2\in\mathbb{P}_g(p_1)}\alpha(p_1, p_2))\cdot u_m(\mathbf{q}, \mathbf{b}, p_1, p_2)\\
\end{aligned}
\end{displaymath}
where $\alpha(p_1, p_2) = Pr[\mathcal{M}(\mathbf{q}, \mathbf{b})= (p_1,p_2)]$.

Similarly, the expected utility of a buyer $n$ should be
\begin{displaymath}
\begin{aligned}
&E[u_n(\mathbf{q}, \mathbf{b}, \mathbb{P'}_g, \mathbb{P'}_s)]\\
=&\sum_{p_2\in\mathbb{P'}_g}(\sum_{p_1\in\mathbb{P'}_s(p_2)}\alpha(p_1, p_2))\cdot u_n(\mathbf{q}, \mathbf{b}, p_1, p_2)\\
\end{aligned}
\end{displaymath}
where $\mathbb{P'}_g = [1..n_{max} \cdot b_{max}]$, and $\mathbb{P'}_s(p_2) = [1..p_2]$.

With these computations of expected utilities, we can derive that the improved mechanism is $\epsilon u_{1max}$-truthful for sellers and $\epsilon u_{2max}$-truthful for buyers. So the improved mechanism is $\gamma$-truthful, where $\gamma=max(\epsilon u_{1max}, \epsilon u_{2max})$.
\end{proof}

\begin{mytheorem}
The improved mechanism achieves approximate social welfare maximization.
\end{mytheorem}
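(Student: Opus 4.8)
The plan is to mirror the structure of the proof of Theorem~\ref{the:social-welfare}, but to carry it out only once rather than twice, since the improved mechanism is a single application of the exponential mechanism over the joint output set $\mathbb{P}=\{(p_s,p_g)\mid p_s\in\mathbb{P}_s\wedge p_g\in\mathbb{P}_g(p_s)\}$. First I would set $C=n_{max}\cdot b_{max}-1=\Delta W$ and let $OPT=\max_{(p_s,p_g)\in\mathbb{P}}W(\mathbf{q},\mathbf{b},p_s,p_g)$; note that by the definition of $W_1$ in the basic mechanism this $OPT$ coincides with $OPT_1$ from Theorem~\ref{the:social-welfare}, so the improved mechanism is being compared against the same optimal social welfare.

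Next I would reuse the standard exponential-mechanism utility bound: defining $R_t=\{(p_s,p_g)\in\mathbb{P}:W(\mathbf{q},\mathbf{b},p_s,p_g)>OPT-t\}$ and $\overline{R_{2t}}=\{(p_s,p_g)\in\mathbb{P}:W(\mathbf{q},\mathbf{b},p_s,p_g)\le OPT-2t\}$ for a small constant $t>0$, the same chain of inequalities as in Theorem~\ref{the:social-welfare} gives
\begin{displaymath}
Pr(\mathcal{M}(\mathbf{q},\mathbf{b})\in\overline{R_{2t}})\le |\mathbb{P}|\,{exp}\!\left(\frac{-\epsilon t}{2C}\right),
\end{displaymath}
hence $Pr(\mathcal{M}(\mathbf{q},\mathbf{b})\in R_{2t})\ge 1-\frac{t}{OPT}$ whenever $t\ge\frac{2C\ln(|\mathbb{P}|OPT/t)}{\epsilon}$, and therefore $E[W(\mathbf{q},\mathbf{b},\mathcal{M}(\mathbf{q},\mathbf{b}))]>OPT-3t$. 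Choosing $t=\frac{2C\ln(e+\epsilon|\mathbb{P}|OPT/2C)}{\epsilon}$ satisfies the required lower bound on $t$ (exactly as in the basic proof), which yields
\begin{displaymath}
E[W(\mathbf{q},\mathbf{b},\mathcal{M}(\mathbf{q},\mathbf{b}))]>OPT-\frac{6C\ln\!\left(e+\frac{\epsilon|\mathbb{P}|OPT}{2C}\right)}{\epsilon}.
\end{displaymath}
I would then close by contrasting this single logarithmic additive loss with the sum of two such losses in Theorem~\ref{the:social-welfare}, and bounding $|\mathbb{P}|\le q_{max}\cdot n_{max}\cdot b_{max}$ if an explicit expression is wanted, so that the additive gap from the optimum is $O\!\big(\frac{C}{\epsilon}\ln(q_{max}n_{max}b_{max}OPT/C)\big)$.

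The only real obstacle is a bookkeeping subtlety rather than a conceptual one: one must confirm that the $OPT$ achievable over the joint set $\mathbb{P}$ in the improved mechanism is the same quantity $OPT_1$ that the basic mechanism is benchmarked against, i.e.\ that restricting to profitable pairs $p_s\le p_g$ does not discard the welfare-maximizing configuration — but this holds because any unprofitable pair contributes fewer transactions and hence no larger social welfare, so the maximizer already lies in $\mathbb{P}$. Everything else is a verbatim repetition of the argument in Theorem~\ref{the:social-welfare} with $\epsilon$ in place of $\epsilon_1$ and a single term in place of the two-stage composition, so I would state it compactly and refer back to that proof for the omitted algebra.
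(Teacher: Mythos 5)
Your proposal is correct and follows essentially the same route as the paper's own proof: a single application of the standard exponential-mechanism accuracy argument over the joint set $\mathbb{P}$, with the same sets $R_t$ and $\overline{R_{2t}}$, the same choice $t=\frac{2C\ln(e+\epsilon|\mathbb{P}|OPT/2C)}{\epsilon}$, and the same final bound $E[W]>OPT-\frac{6C\ln(e+\epsilon|\mathbb{P}|OPT/2C)}{\epsilon}$. Your added remarks (that $OPT$ coincides with $OPT_1$ and the explicit bound on $|\mathbb{P}|$) are harmless extras the paper leaves implicit.
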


\begin{proof}
Here we use $p$ to represent tuple $(p_s,p_g)$, and let $\mathbf{B} = \mathbf{q} \cup \mathbf{b}$, $C=n_{max}\cdot b_{max} - 1$. For the improved mechanism, let $OPT=max_{p \in \mathbb{P}}W(\mathbf{B}, p)$ denote the maximum $W$ of mechanism $\mathcal{M}$. We define the sets $R_{t}=\{p:W(\mathbf{B}, p)>OPT-t\}$ and $\overline{R_{2t}}=\{p:W(\mathbf{B}, p)\leq OPT-2t\}$ for a small constant $t>0$. Then, we have
\begin{displaymath}
\begin{aligned}
Pr(\overline{R_{2t}})&\leq \frac{Pr(\overline{R_{2t}})}{Pr(R_{t})}\\
&\leq |\mathbb{P}|{exp}(\frac{-\epsilon t}{2C})\\
\end{aligned}
\end{displaymath}

Then, $Pr(p\in R_{2t})\geq 1-|\mathbb{P}|{exp}(\frac{-\epsilon t}{2C})$. If we have $t\geq \frac{2Cln(\frac{|\mathbb{P}|OPT}{t})}{\epsilon}$, and we make $Pr(p\in R_{2t})\geq 1-\frac{t}{OPT}$.
Therefore, for any $t\geq \frac{2Cln(\frac{|\mathbb{P}|OPT}{t})}{\epsilon}$, we have
\begin{displaymath}
\begin{aligned}
E_{p \in \mathbb{P}}[W(\mathbf{B}, p)]&>\sum_{p\in R_{2t}}W(\mathbf{B}, p)Pr(\mathcal{M}(\mathbf{B})=p)\\
&>(OPT-2t)(1-\frac{t}{OPT})\\
&>OPT-3t
\end{aligned}
\end{displaymath}

If we let $t=\frac{2Cln(e+\frac{\epsilon|\mathbb{P}|OPT}{2C})}{\epsilon}\geq \frac{2C}{\epsilon}$, and we have
\begin{displaymath}
\begin{aligned}
\frac{2ln(\frac{|\mathbb{P}|OPT}{t})}{\epsilon}&\leq \frac{2Cln(e+\frac{\epsilon|\mathbb{P}|OPT}{2C})}{\epsilon}\\
&=t
\end{aligned}
\end{displaymath}

Then put $t$ int the top equation, we have
\begin{displaymath}
\begin{aligned}
E_{p \in \mathbb{P}}[W(\mathbf{B}, p)]&>OPT-3t\\
&>OPT-3\frac{2Cln(e+\frac{\epsilon|\mathbb{P}|OPT}{2C})}{\epsilon}
\end{aligned}
\end{displaymath}

The proof is completed.
\end{proof}

\subsection{Improved Utility Function}\label{sec:improved-utility}

It is quite natural that we use the social welfare as the utility function for the exponential mechanism in DDSM. Our goal is to maximize the social welfare, and selecting prices in term of the social welfare they result in is straightforward to this goal. Nevertheless, we found that the sensitivity of the social welfare (i.e., $n_{max} \cdot b_{max} - 1$) is rarely realized, since it is quite unpractical that buyers in a maximally sized buyer group all bid the maximum bid $b_{max}$, while the corresponding seller asks the minimum quotation $1$. In other word, the sensitivity of the social welfare tends to magnify. Can we find an equivalent utility function to maximize the social welfare but with a ``smaller'' sensitivity? The answer is yes.

Actually, we can simply use the number of seller-buyer-group winner pairs $K$ as the utility function. Intuitively, the social welfare is proportional to the number of winner pairs $K$, and the sensitivity of $K$ is $\Delta K = 1$. Moreover, this sensitivity could be frequently realized, since changing a seller's quotation or a buyer's bid in an auction may easily change the value of $K$ by 1. Thus, from the intuitive discussion above, we can see that $K$ is possibly a good utility function that we can find.

We can confirm the above finding by Theorem \ref{the:utility-k}.
\begin{mytheorem}\label{the:utility-k}
In our mechanism DDSM, it is better to use the number of winner pairs $K$ as the utility function than the social welfare $W$ so as to maximize the social welfare.
\end{mytheorem}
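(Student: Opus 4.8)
The plan is to verify that replacing the utility function $W$ by the winner‑pair count $K$ throughout DDSM (and equally throughout its improved version) preserves $\epsilon$‑differential privacy and $\gamma$‑truthfulness, and then to show that the resulting social‑welfare approximation is no worse — and typically far better — than the one obtained with $W$. The privacy claim is immediate: the modified price selection is again a single use of the exponential mechanism, now with the correct sensitivity $\Delta K = 1$, so the computation in the proof of Theorem~\ref{the:dp} (resp. Theorem~\ref{the:imp-dp}) applies verbatim. The truthfulness claim is also essentially a restatement: the proof of Theorem~\ref{the:gamma-truthfulness} (resp. Theorem~\ref{the:imp-gamma-truthfulness}) used only (a) the differential‑privacy inequality $\exp(-\epsilon)\le \Pr[\mathcal{M}=p\mid \text{truthful}]/\Pr[\mathcal{M}=p\mid \text{lie}]\le \exp(\epsilon)$ and (b) the fact that, as long as an agent's misreport keeps it inside the potential‑winner set, the random selection of winners makes the auction outcome independent of that agent's own report; neither ingredient cares whether the exponential mechanism is driven by $W$ or by $K$, so the same two‑case analysis gives the same $\gamma$.

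The substantive part is the welfare guarantee, and I would support it with a structural lemma linking $K$ and $W$. Because every realized transaction is profitable — for the $l$‑th winner pair, $w_l=\sum_i b_l(i)-q'_l\ge g_l-q'_l\ge p_g-p_s\ge 0$ — no price choice can form more than $OPT_K:=\max_{\mathbb{P}}K$ winner pairs, and adding pairs never lowers the social welfare. Consequently, starting from any welfare‑optimal clearing‑price pair and moving it along the efficient frontier (first lowering $p_g$ toward $p_s$, then shifting the common value to the point maximizing $\min(k_s,k_g)$) one reaches a pair that attains $K=OPT_K$ and $W=OPT_W$ simultaneously; symmetrically, a price pair whose winner count is within $\delta$ of $OPT_K$ can lack at most $\delta$ of the profitable pairs of an optimal configuration, so its welfare is at least $OPT_W$ minus the total value of those at most $\delta$ missing pairs. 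In short, ``near‑optimal in $K$'' forces ``near‑optimal in $W$'', with the loss expressed in the true per‑pair welfare rather than in the pessimistic quantity $C:=n_{\max}b_{\max}-1$.

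Given the lemma, I would rerun the accuracy argument of Theorem~\ref{the:social-welfare} essentially unchanged but with $\Delta W = C$ replaced by $\Delta K = 1$: defining $R_t=\{p:K(\mathbf{B},p)>OPT_K-t\}$ and $\overline{R_{2t}}=\{p:K(\mathbf{B},p)\le OPT_K-2t\}$, the identical chain of inequalities yields $\Pr(\mathcal{M}\in\overline{R_{2t}})\le |\mathbb{P}|\exp(-\epsilon t/2)$, hence for $t=O\!\left(\ln(e+\epsilon|\mathbb{P}|\,OPT_K)/\epsilon\right)$ the selected pair has $K\ge OPT_K-O(\ln|\mathbb{P}|/\epsilon)$ with high probability. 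Pushing this through the structural lemma shows the realized social welfare is within the value of $O(\ln|\mathbb{P}|/\epsilon)$ winner pairs of $OPT_W$, whereas the analogous bound for the $W$‑based mechanism multiplies the same logarithmic factor by $C=n_{\max}b_{\max}-1$. Since that worst‑case sensitivity is almost never attained while $\Delta K=1$ is tight, the $K$‑based utility gives an instance‑adaptive and strictly better approximation — which is precisely the assertion of the theorem.

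The step I expect to be hardest is making the structural lemma watertight. Because the winning sellers and groups are drawn by \textsc{rand}, $W(p_s,p_g)$ is a random variable rather than a function of $K$ alone, so I would have to argue that ``adding a winner pair does not decrease welfare'' and ``dropping $\delta$ pairs costs only their value'' hold for the single realization of the randomness that is reused in price selection and outcome release (or, failing that, in expectation), and that sliding the clearing prices along the efficient frontier does not dissolve already‑formed profitable pairs. That bookkeeping — not the privacy or truthfulness arguments — is where the care is needed.
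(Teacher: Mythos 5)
Your preliminary observations (that switching the utility to $K$ leaves the differential-privacy and $\gamma$-truthfulness proofs intact with $\Delta K=1$) are fine and match what the paper itself asserts right after Theorem~\ref{the:utility-k}. The problem is the heart of your welfare argument: the structural lemma ``near-optimal in $K$ forces near-optimal in $W$, with the loss measured by the value of the missing pairs'' is not true for DDSM as defined, and the reason is exactly the RAND step you flagged. $K(p_s,p_g)=\min(k_s,k_g)$ only counts transactions; it says nothing about which sellers and buyer groups get matched. A price pair can attain $K=OPT_K$ while $p_g$ sits barely above $p_s\geq$ the $OPT_K$-th quotation, so that $k_g$ far exceeds $K$ and the winning groups are drawn uniformly from a large potential set containing many low-bid groups; the realized welfare then falls short of $OPT_W$ by much more than ``the value of $\delta$ missing pairs'' even though $\delta=0$ (and symmetrically for sellers when $k_s>K$). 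Monotonicity (``adding a pair never lowers welfare'') holds for the greedy top-$k$ matching, but the mechanism's matching at such price pairs is not the greedy one, and the exponential mechanism driven by $K$ puts equal weight on every price pair with the same $K$, including these bad ones — it does not concentrate on your ``efficient frontier.'' Hence the quantitative conclusion you draw (welfare within the value of $O(\ln|\mathbb{P}|/\epsilon)$ pairs of $OPT_W$, versus a factor $C=n_{max}b_{max}-1$ for the $W$-based mechanism) does not follow from the sketch; repairing it would require either restricting the output set to pairs with $k_s=k_g=K$ or proving an expected-welfare bound over the joint randomness of the price selection and RAND, which is precisely the bookkeeping you acknowledge but do not supply.

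It is also worth noting that the paper's own proof is far more modest than what you attempt and takes a different route: it never reruns the accuracy bound of Theorem~\ref{the:social-welfare} with $\Delta K=1$. It simply observes $W=\sum_{l=1}^{K}w_l\leq K\cdot C$, so $\exp(\epsilon W/2\Delta W)\leq\exp(\epsilon K/2\Delta K)$, and interprets the $K$-based exponential weights as $W$-based weights with a smaller ``effective sensitivity'' $\Delta W'=W/K\leq\Delta W$; the conclusion ``less noise, hence better social welfare'' is drawn at that heuristic level rather than as a formal instance-optimal guarantee. Your plan, if the structural lemma were made correct, would yield a genuinely stronger and more informative statement than the paper's comparison; as written, though, the decisive step fails, so the proposal does not yet establish the theorem.
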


\begin{proof}
Let $\mathcal{M}$ denote the mechanism, $\mathbf{B}$ denote the input of $\mathcal{M}$ and $p$ denote any output of $\mathcal{M}$.

When using the social welfare $W$ as the utility function, we get
\[Pr[\mathcal{M}(\mathbf{B}) = p] \propto exp(\frac{\epsilon W}{2\Delta W})\]

While using the number of winner pairs $K$ as the utility function, we get
\[Pr[\mathcal{M}(\mathbf{B})=p] \propto exp(\frac{\epsilon K}{2\Delta K})\]

Furthermore, let $C = n_{max}\cdot b_{max} - 1$, we have
\begin{displaymath}
\begin{aligned}
exp(\frac{\epsilon W}{2\Delta W}) &= exp(\frac{\epsilon \sum_{l=1}^{K}{w_l}}{2C})\\
&\leq exp(\frac{\epsilon K \cdot C}{2C})\\
&= exp(\frac{\epsilon K}{2\Delta K})\\
&= exp(\frac{\epsilon W}{2\Delta W'})
\end{aligned}
\end{displaymath}
where $\Delta W' \le \Delta W$.

From the inequality above, we can view the utility function $K$ as the utility function $W$ with a smaller sensitivity $\Delta W'$. That is, by applying $K$ as the utility function, we equivalently reduce the sensitivity when applying $W$ as the utility function. Therefore, it is better to use $K$ instead of $W$ as the utility function, since less noise added is beneficial to maximize the social welfare.
\end{proof}
From Theorem \ref{the:utility-k}, we can see that for both the basic mechanism and the improved mechanism, substitution of $W$ with $K$ will probably improve the utility of the auction outcome. Moreover, due to the principle of exponential mechanism, this substitution will not affect the achievement of differential privacy, $\gamma$-truthfulness and approximate social welfare maximization, and so on.

\subsection{Extension to Random Grouping}

In our mechanism, so far, we have required that the buyer grouping algorithm employed should be deterministic and independent of buyers' bids. In other word, the grouping should be completely determined by the buyer set. Here, we show that the grouping algorithm can be extended to any random grouping algorithm independent of buyers' bids, as follows.
\begin{mytheorem}\label{the:random-grouping}
For any random grouping algorithm independent of buyers' bids, DDSM can still achieve $\epsilon$-differential privacy.
\end{mytheorem}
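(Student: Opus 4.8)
The plan is to exhibit any bid-independent random grouping procedure as a probability mixture of deterministic groupings, and then reduce to the already-proven privacy guarantee together with the convexity lemma. Let $\mathcal{A}$ be any randomized grouping algorithm whose output depends only on the buyer set $\mathbb{N}$ (equivalently, on the conflict graph built from buyers' locations) but not on the bid profile $\mathbf{b}$ or the quotation profile $\mathbf{q}$. Then $\mathcal{A}$ induces a distribution $\{p_G\}_{G \in \mathcal{G}}$ over the (finite) collection $\mathcal{G}$ of groupings it can possibly output, and because $\mathcal{A}$ ignores $\mathbf{b}$ and $\mathbf{q}$, the weights $p_G$ are the same for every input profile $\mathbf{D} = \mathbf{b} \cup \mathbf{q}$ — in particular, the same for any two neighboring profiles $\mathbf{D}$ and $\mathbf{D'}$.

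First I would fix an arbitrary $G \in \mathcal{G}$ and let $\mathcal{M}_G$ denote the mechanism obtained from DDSM by hard-wiring the buyer grouping to $G$ (that is, replacing Step (1), Buyer Group Formation, by the instruction ``output $G$''). This is exactly the situation handled by Theorem~\ref{the:dp}: that proof only uses the fact that, for neighboring profiles, the grouping is unchanged, hence the orderings $\mathbb{O'}$, $\mathbb{O''}$, the output ranges $\mathbb{P}_s$, $\mathbb{P}_g(p_s)$, and the sensitivities $\Delta W_1 = \Delta W_2 = n_{max}\cdot b_{max} - 1$ are unchanged, after which the exponential-mechanism bound and the composition lemma do the rest; the internal RAND coin tosses are likewise independent of everything else. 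None of these steps care how $G$ was chosen, only that it is a fixed function of the buyer set. Hence $\mathcal{M}_G$ is $\epsilon$-differentially private for every $G \in \mathcal{G}$, and the same holds for the improved single-shot variant by Theorem~\ref{the:imp-dp}.

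Next I would note that running DDSM with the random grouping algorithm $\mathcal{A}$ is precisely the mechanism ``sample $G \sim \{p_G\}$, then run $\mathcal{M}_G(\mathbf{D})$'', since $\mathcal{A}$'s randomness is independent of the rest of the computation. Because each $\mathcal{M}_G$ is $\epsilon$-differentially private, $\sum_{G \in \mathcal{G}} p_G = 1$, and the mixture weights $p_G$ do not depend on the input, Lemma~\ref{the:convexity} (Convexity) applies directly and shows that the composite mechanism is $\epsilon$-differentially private; this is exactly the claim. (If $\mathcal{G}$ is not finite, the identical argument goes through with the finite mixture replaced by an arbitrary mixture of $\epsilon$-DP mechanisms, to which the convexity argument extends without change.)

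The only place where the hypothesis is actually needed — and therefore the point I would state most carefully — is the assertion that the mixture weights $p_G$ are identical for $\mathbf{D}$ and $\mathbf{D'}$. If the grouping were allowed to depend on $\mathbf{b}$, a single changed bid could shift probability mass among groupings, conditioning on a fixed $G$ would no longer split the two output distributions consistently, and Lemma~\ref{the:convexity} could not be invoked; one would then have to absorb the privacy loss of the grouping step itself, which in general it does not control. I expect no genuine technical obstacle beyond making this independence explicit and invoking the two lemmas cleanly; the substantive work was already done in Theorem~\ref{the:dp}.
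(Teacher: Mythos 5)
Your argument is correct and is essentially the paper's own proof: decompose the bid-independent random grouping into a mixture of deterministic (fixed) groupings, observe that DDSM with each fixed grouping is $\epsilon$-differentially private by Theorem~\ref{the:dp}, and conclude via the convexity property of Lemma~\ref{the:convexity}. Your added emphasis that the mixture weights must not depend on the input profile makes explicit the point the paper leaves implicit, but the route is the same.
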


\begin{proof}
Given any random grouping algorithm $\mathcal{G}$, suppose that it can be decomposed into $T$ deterministic grouping algorithms $\mathcal{G}_1, \mathcal{G}_2, \cdots, \mathcal{G}_T$, with probabilities $p_1, p_2, \cdots, p_T$, respectively, where $\sum_{i=1}^T {p_i=1}$. Then, when using each $\mathcal{G}_i (1 \le i \le T)$ as the grouping algorithm in DDSM, we get a mechanism with $\epsilon$-differential privacy due to Theorem \ref{the:dp}. By the convexity of differential privacy as shown in Lemma \ref{the:convexity}, we also get a mechanism with $\epsilon$-differential privacy when using the random grouping $\mathcal{G}$. This completes the proof.
\end{proof}
We can also show that DDSM achieves $\gamma$-truthfulness and approximate social welfare maximization when using a random grouping algorithm, since both properties can be derived from the property of differential privacy as in our previous proofs. Here, we omit these proofs due to the space limitation.

\section{Performance Evaluation}\label{sec:experiment}
In this section, we focus on the implementation of DDSM and the evaluation of its performance.
\subsection{Experimental Settings}
In the experimental settings, we simulate the square area that the buyers are randomly located in is $2000m \times 2000m$, and the distance that any two buyers conflict with each other is $500m$. The bids of buyers and the quotations of sellers are randomly generated in the range [1,50] and [1,100], respectively. The experimental results are averaged over 100 runs. In our experiments, we focus on the following performance metrics:
\begin{itemize}
\item \emph{Social welfare}: The sum of all winning buyers' bids subtracting the sum of all winning sellers' quotations.
\item \emph{Social welfare ratio}: The ratio of the social welfare produced by an auction to the optimal social welfare produced by all possible auctions.
\item \emph{Running time}: The time spent to execute an auction.

\end{itemize}

\subsection{Experimental Results}

In our experiments, we evaluate and compare the performances of the basic DDSM and the improved DDSM. We also use the truthful double spectrum auction mechanism, TRUST\cite{Zhou2009TRUST}, which has no privacy guarantee, as a benchmark mechanism.

\begin{figure}[htbp]
\begin{center}
\includegraphics[width=0.25\textwidth]{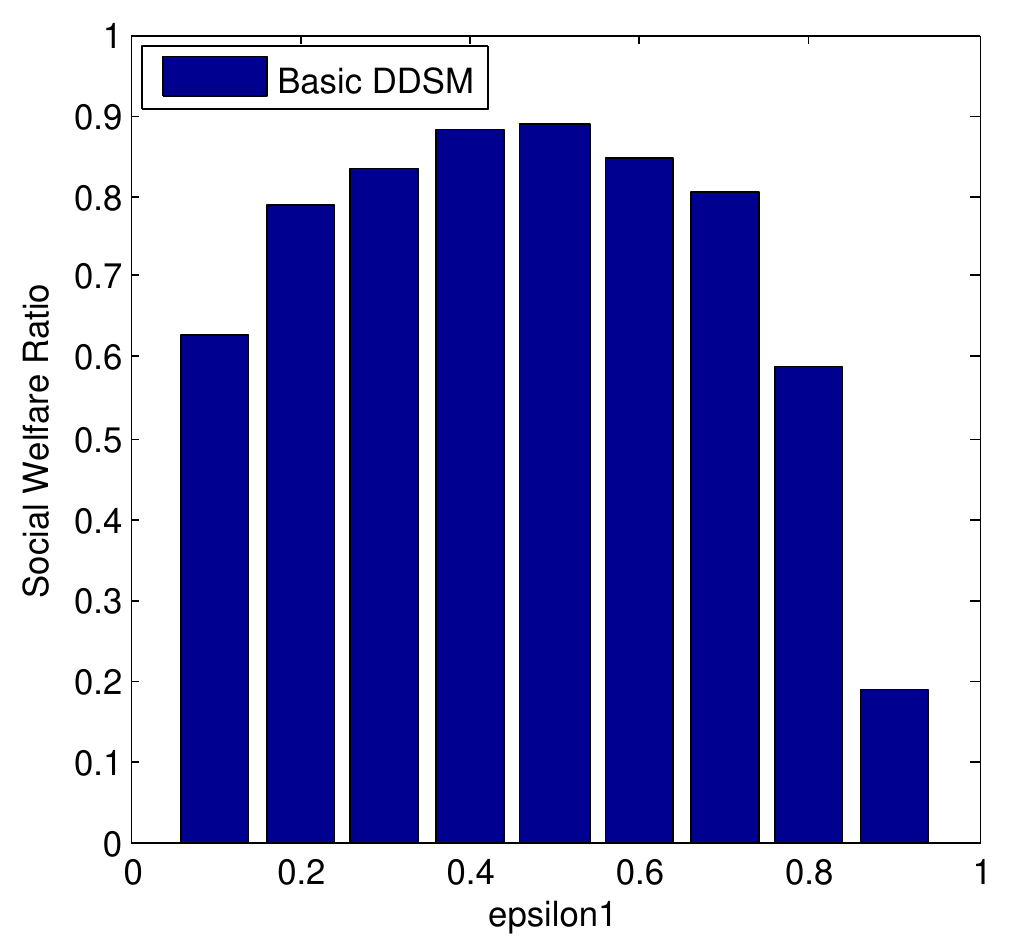}
\caption{Ratio as $\epsilon_1$ of basic DDSM grows ($\epsilon_1+\epsilon_2=1$, $M=800$, $N=200$)}\label{fig:epsilon12}
\end{center}
\end{figure}

\begin{figure*}[ht]
\centering
\begin{minipage}[b]{0.48\linewidth}
\includegraphics[width=1\textwidth]{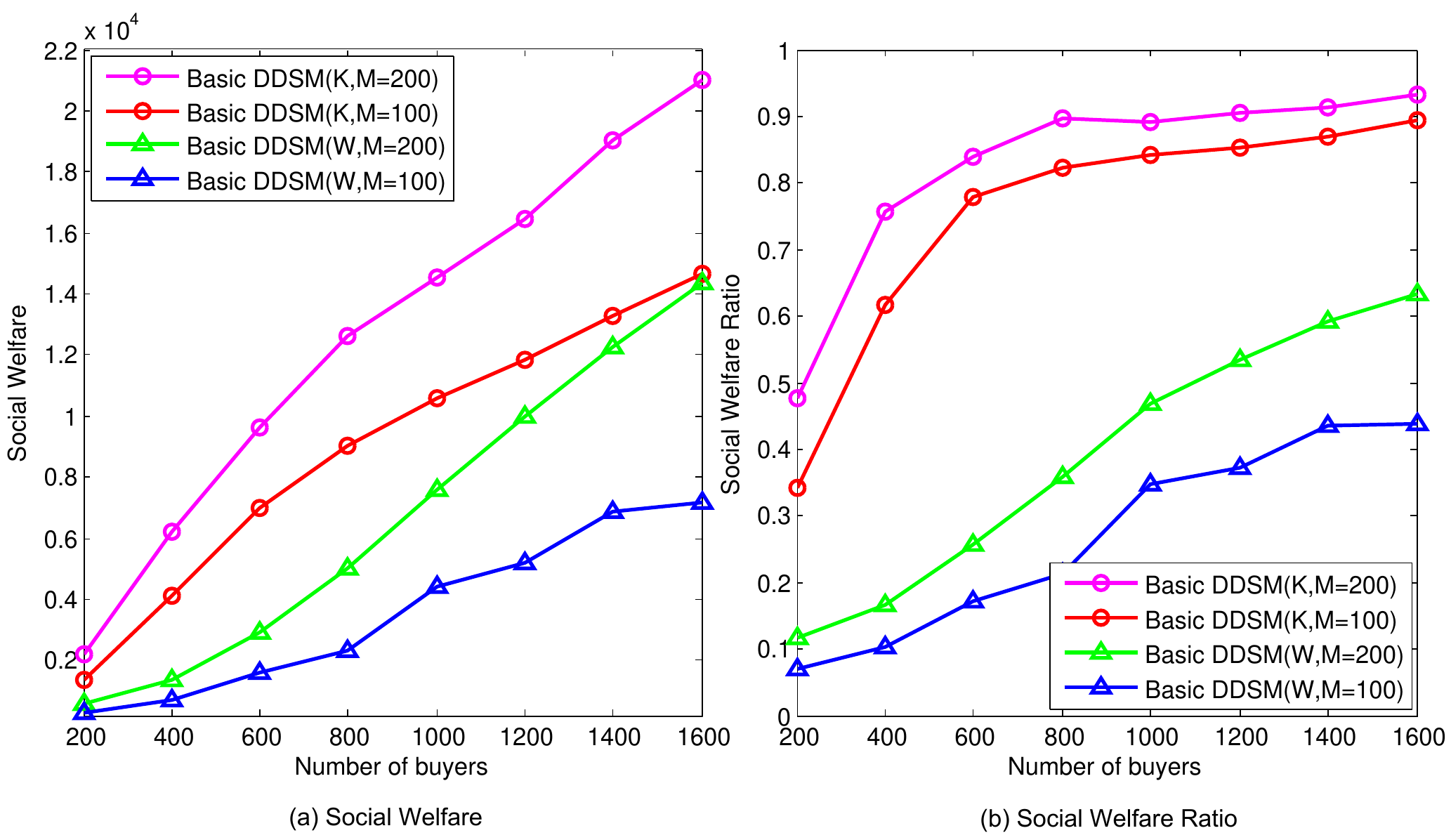}
\caption{Comparison between different utility functions}\label{fig:utility}
\label{fig:minipage1}
\end{minipage}
\quad
\begin{minipage}[b]{0.48\linewidth}
\includegraphics[width=1\textwidth]{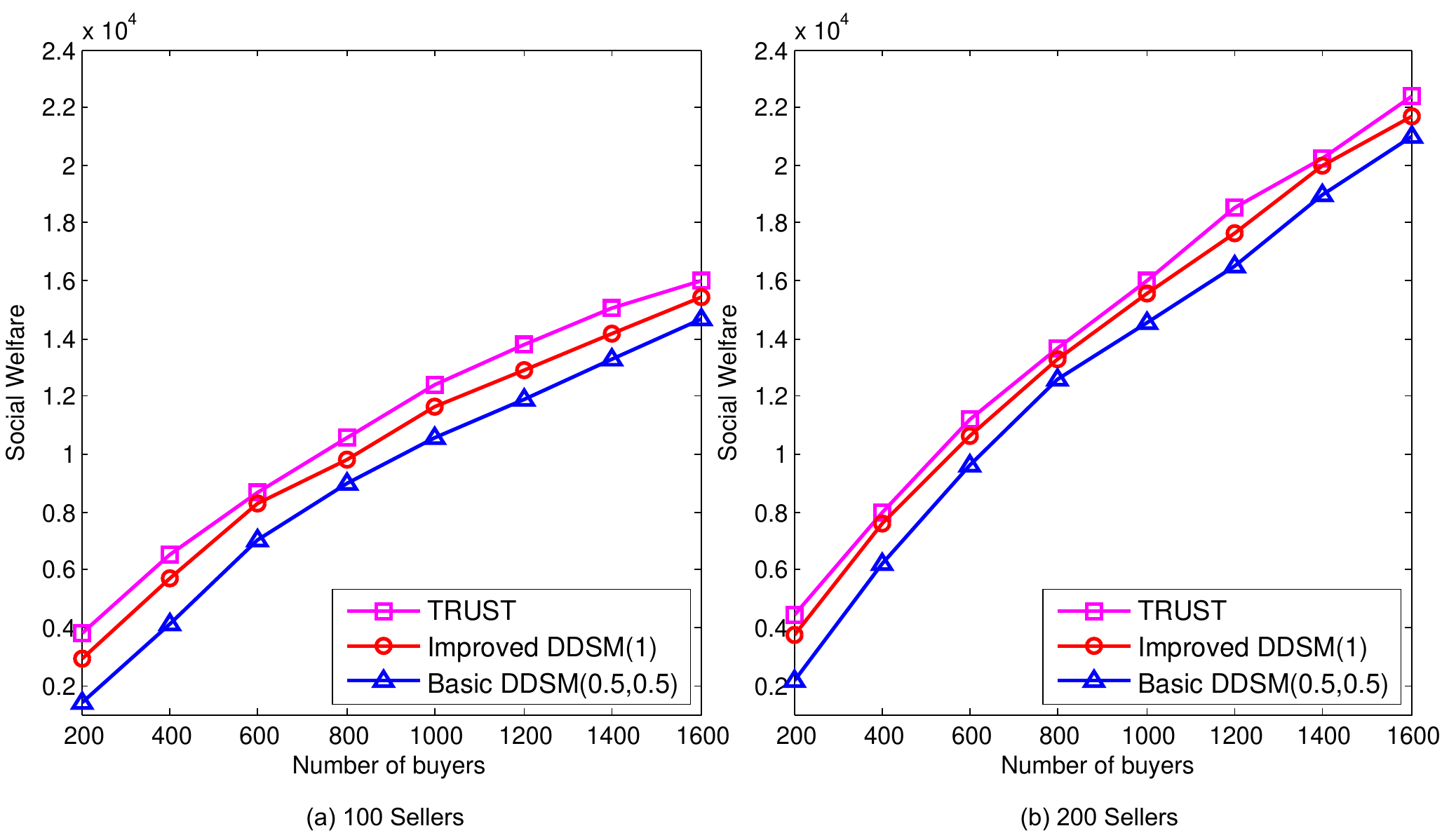}
\caption{Social welfare generated by different mechanisms }\label{fig:cmp-social}
\label{fig:minipage2}
\end{minipage}
\quad
\begin{minipage}[b]{0.48\linewidth}
\includegraphics[width=1\textwidth]{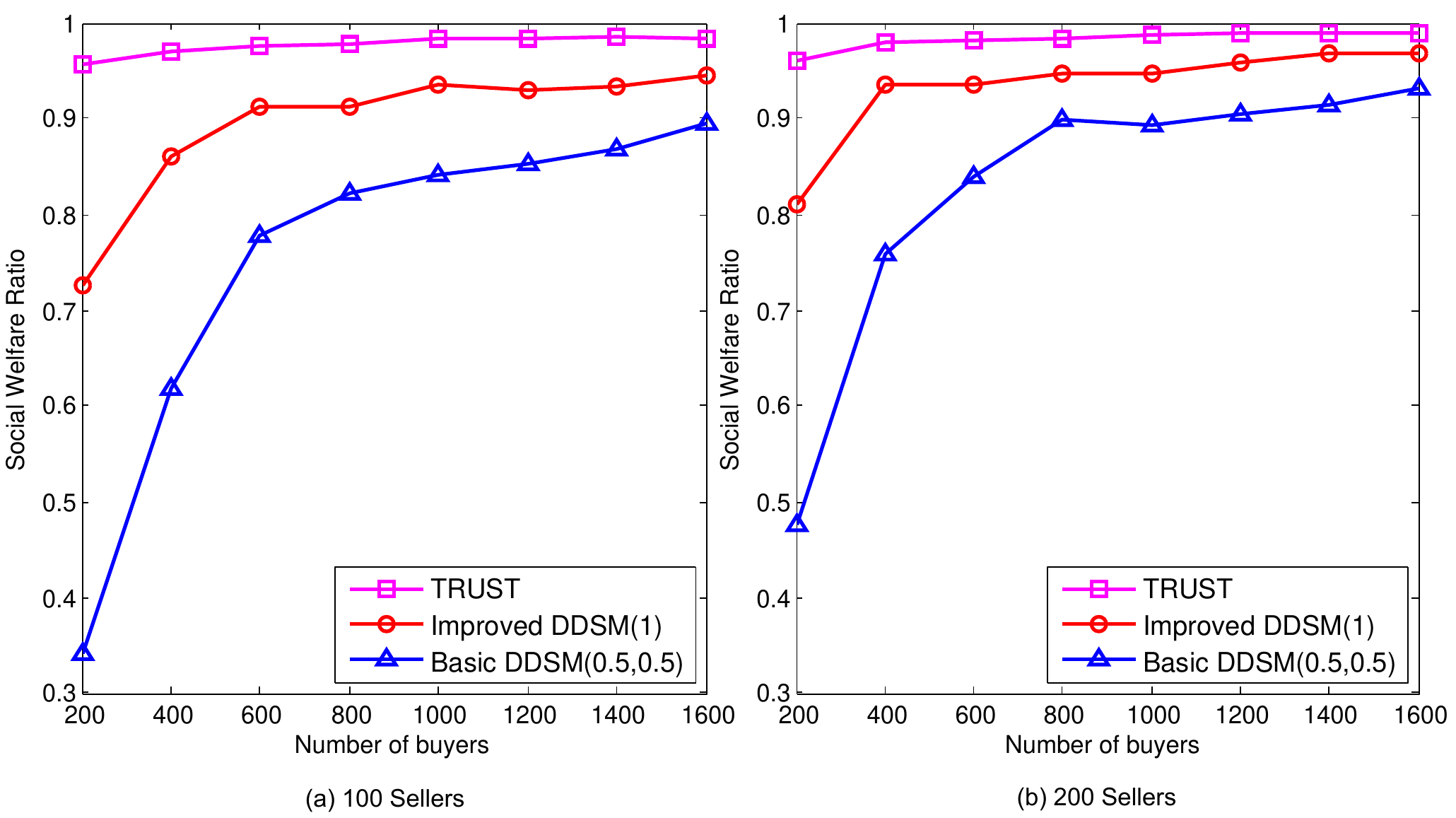}
\caption{Social welfare ratio of different mechanisms}\label{fig:cmp-ratio}
\label{fig:minipage3}
\end{minipage}
\quad
\begin{minipage}[b]{0.48\linewidth}
\includegraphics[width=1\textwidth]{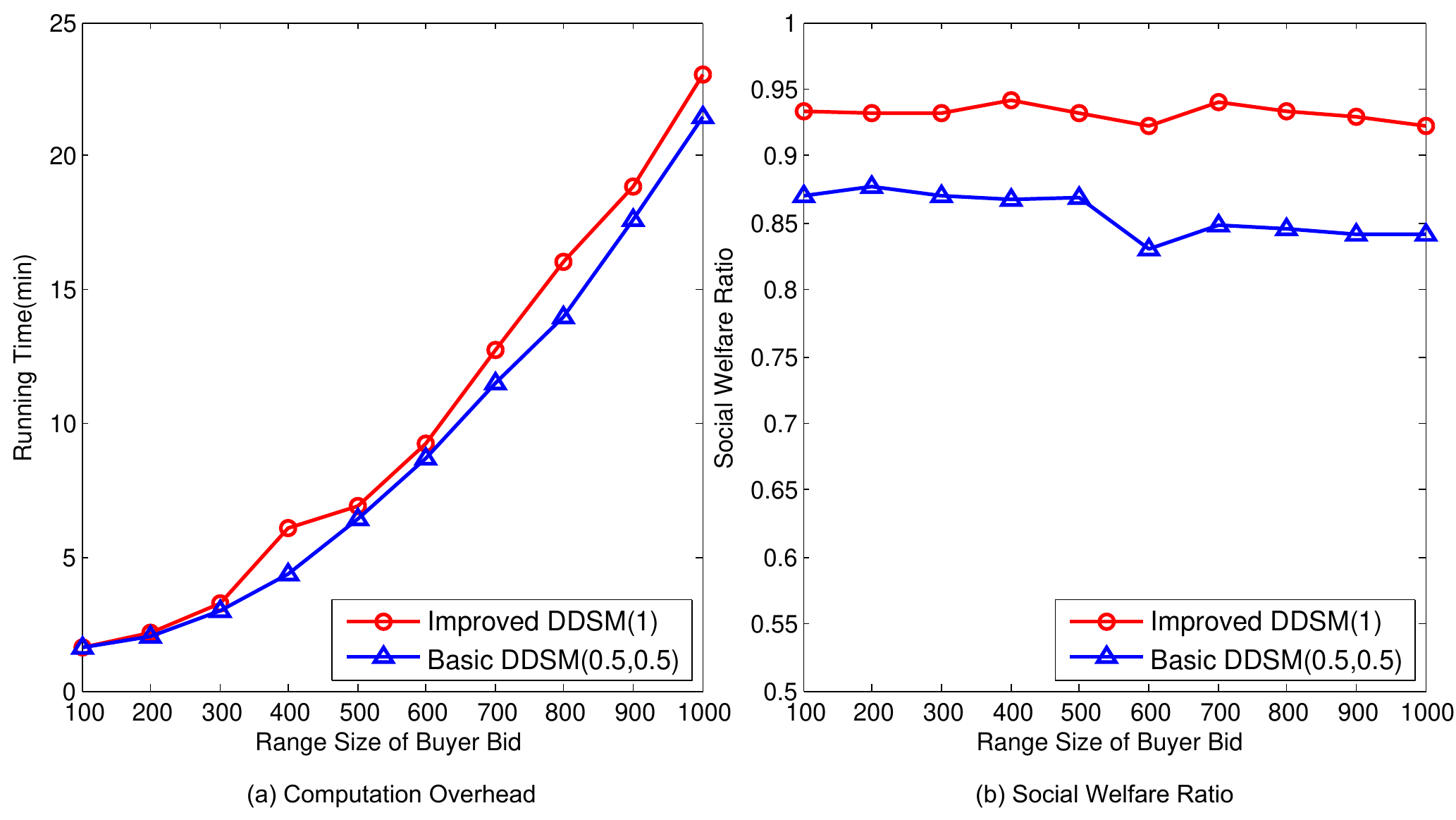}
\caption{Performance comparison of different bid ranges}\label{fig:cmp-range}
\label{fig:minipage4}
\end{minipage}
\end{figure*}

\textit{(1) Performance of basic DDSM.} In Fig.~\ref{fig:epsilon12}, we plot the social welfare ratios of the basic DDSM as the privacy budget $\epsilon_1$ varies from $0.1$ to $0.9$ under the constraint $\epsilon_1 + \epsilon_2 = 1$, while fixing the numbers of sellers and buyers to $800$ and $200$, respectively. We can see that the social welfare ratio reaches the maximum when $\epsilon_1 = \epsilon_2 = 0.5$, and degrades quickly when $\epsilon_1\gg \epsilon_2$. This shows that both privacy budgets $\epsilon_1$ and $\epsilon_2$ are roughly of equal significance for achieving utility, so they should be assigned the same value. Furthermore, $\epsilon_2$ is more sensitive to small values, which would add much noise to the second exponential mechanism, and make it produce a poor final social welfare. Overall, the maximum social welfare ratio is nearly $0.9$, and it seems acceptable for practice. In the following, we will use the basic DDSM with $\epsilon_1 = \epsilon_2 = 0.5$ unless otherwise stated.

\textit{(2) Performance of improved utility function.} Fig.~\ref{fig:utility} illustrates the comparisons of both the social welfare ratios and the social welfare values of the basic DDSM when using different utility functions, namely the social welfare $W$ and the number of winner pairs $K$. From Fig.~\ref{fig:utility}(a), we can see that the resulted social welfare when using utility function $K$ is obviously higher than that when using utility function $W$, and this gap becomes even larger as the number of buyers or sellers increases. Similar trends can be observed even more apparently for social welfare ratios in Fig.~\ref{fig:utility}(b). This experimental result demonstrates that utility function $K$ provide a better utility and thus a higher social welfare value than utility function $W$, which is in accord with our theoretical result in Section~\ref{sec:improved-utility}. In the following experiments, we will use the number of winner pairs $K$ as the utility function for both the basic DDSM and the improved DDSM.

\textit{(3) Performance comparison: Basic vs. Improved.} Fig.~\ref{fig:cmp-social} traces the social welfare values of the basic DDSM, the improved DDSM, and TRUST, when the number of buyers $N$ varies from 200 to 1600 with at the step of 200, and the number of sellers $M$ is fixed to 100 and 200. Expectedly, the social welfare values of all three mechanisms increase as the number of buyers or sellers increases, and TRUST performs the best, the improved DDSM takes second place, and the basic DDSM is the last one. Moreover, the curve of the improved DDSM is much closer to that of TRUST than that of the basic DDSM. Since TRUST guarantees truthfulness but no differential privacy, we view TRUST as an ideal benchmark mechanism to struggle for. The above observations show that the improved DDSM well outperforms the basic DDSM, and performs even closely to ideal mechanism of TRUST.

Fig.~\ref{fig:cmp-ratio} traces the social welfare ratios of the basic DDSM, the improved DDSM and TRUST in exactly the same conditions as that of Fig.~\ref{fig:cmp-social}. As the number of buyers $N$ raises, the social welfare ratio of TRUST remains basically consistent and always more than $90\%$, while the ratios of the basic DDSM and the improved DDSM are getting closer and closer to that of TRUST. Similarly, the curve of the improved DDSM is much closer to that of TRUST than that of the basic DDSM, and this means that the improved DDSM also well outperforms the basic DDSM in social welfare ratios.

Fig.~\ref{fig:cmp-range} shows the comparisons of running times and social welfare ratios between the basic DDSM and the improved DDSM as the range size of buyers' bids (the range size of sellers' quotations are twice of it) varies from $100$ to $1000$, while the numbers of buyers and sellers are fixed to 800 and 200, respectively.  From Fig.~\ref{fig:cmp-range}(a), we can observe that the improved DDSM takes slightly more running time than the basic DDSM. The reason is that the size of the sampling set in the improved DDSM is much bigger than that in the basic DDSM. Specifically, the former is the product of the range size of buyers' bids and that of sellers' quotations, while the later is the sum of these two range sizes. From Fig.~\ref{fig:cmp-range}(b), we find that the social welfare ratio of the improved DDSM is always higher than that of the basic DDSM as the range size increases, and this indicates that the increase of the range size has limited impact on the performance advantage of the improved DDSM over the basic DDSM.

\begin{figure}[htbp]
\begin{center}
\includegraphics[width=0.25\textwidth]{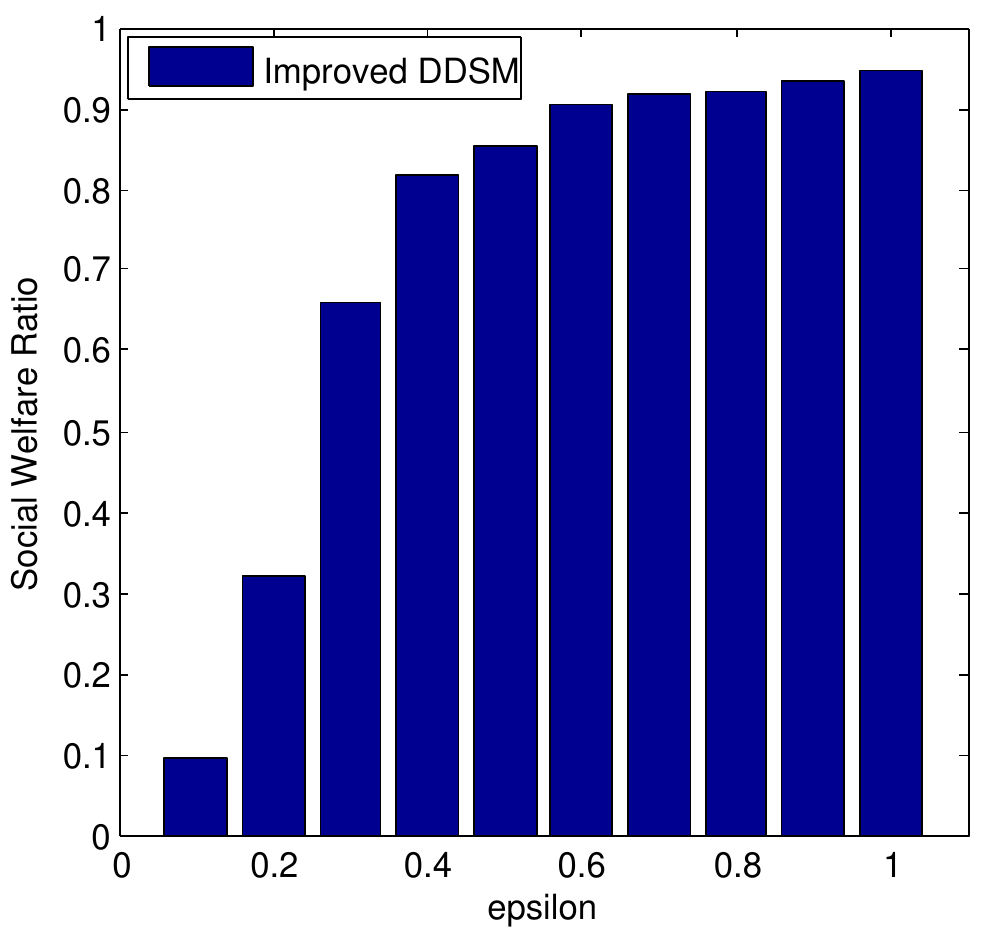}
\caption{Ratio as $\epsilon$ of the improved DDSM grows}\label{fig:epsilon}
\end{center}
\end{figure}

\textit{(4) Performance of improved DDSM.} Fig.~\ref{fig:epsilon} depicts the performance of the improved DDSM in term of social welfare ratio as the privacy budget $\epsilon$ varies from $0.1$ to $1$, while the number of buyers $N=800$ and the number of sellers $N=200$. Naturally, the social welfare ratio raises as the privacy budget $\epsilon$ increases and thus the privacy lever lowers. We can see that the social welfare ratio becomes more than $0.9$ when $\epsilon > 0.5$. The performance could be acceptable for practical applications.

\section{Conclusions}\label{sec:conclusions}
In this paper, we have proposed a differentially private double spectrum auction mechanism, DDSM, which also achieves $\gamma$-truthfulness and approximate social welfare maximization. To achieve our design goal, we employ the exponential mechanism twice to randomly select selling and buying clearing prices, respectively, and we theoretically prove that all the three properties above-mentioned hold. Later, we improve our mechanism by employing the exponential mechanism only once to reduce the noise added to the mechanism, by designing a utility function with a smaller sensitivity, and by extending deterministic buyer grouping algorithm to random ones. Both theoretical analysis and experimental evaluations show that the improved mechanism achieves a better utility compared to the basic mechanism for the same privacy level.

\bibliography{sample-bibliography}{}

\begin{thebibliography}{10}
\providecommand{\url}[1]{#1}
\csname url@samestyle\endcsname
\providecommand{\newblock}{\relax}
\providecommand{\bibinfo}[2]{#2}
\providecommand{\BIBentrySTDinterwordspacing}{\spaceskip=0pt\relax}
\providecommand{\BIBentryALTinterwordstretchfactor}{4}
\providecommand{\BIBentryALTinterwordspacing}{\spaceskip=\fontdimen2\font plus
\BIBentryALTinterwordstretchfactor\fontdimen3\font minus
  \fontdimen4\font\relax}
\providecommand{\BIBforeignlanguage}[2]{{%
\expandafter\ifx\csname l@#1\endcsname\relax
\typeout{** WARNING: IEEEtran.bst: No hyphenation pattern has been}%
\typeout{** loaded for the language `#1'. Using the pattern for}%
\typeout{** the default language instead.}%
\else
\language=\csname l@#1\endcsname
\fi
#2}}
\providecommand{\BIBdecl}{\relax}
\BIBdecl

\bibitem{Akyildiz2006NeXt}
I.~F. Akyildiz, W.-Y. Lee, M.~C. Vuran, and S.~Mohanty, ``Next
  generation/dynamic spectrum access/cognitive radio wireless networks: A
  survey,'' \emph{Computer networks}, vol.~50, no.~13, pp. 2127--2159, 2006.

\bibitem{Huang2013SPRING}
Q.~Huang, Y.~Tao, and F.~Wu, ``Spring: A strategy-proof and privacy preserving
  spectrum auction mechanism,'' in \emph{INFOCOM, 2013 Proceedings IEEE}.\hskip
  1em plus 0.5em minus 0.4em\relax IEEE, 2013, pp. 827--835.

\bibitem{Huang2015PPS}
H.~Huang, X.-Y. Li, Y.-e. Sun, H.~Xu, and L.~Huang, ``{PPS}: Privacy-preserving
  strategyproof social-efficient spectrum auction mechanisms,'' \emph{IEEE
  Transactions on Parallel and Distributed Systems}, vol.~26, no.~5, pp.
  1393--1404, 2015.

\bibitem{Chen2016Towards}
Z.~Chen, L.~Chen, L.~Huang, and H.~Zhong, ``Towards secure spectrum auction:
  both bids and bidder locations matter: poster,'' in \emph{Proceedings of the
  17th ACM International Symposium on Mobile Ad Hoc Networking and
  Computing}.\hskip 1em plus 0.5em minus 0.4em\relax ACM, 2016, pp. 361--362.

\bibitem{Chen2014PS}
Z.~Chen, L.~Huang, L.~Li, W.~Yang, H.~Miao, M.~Tian, and F.~Wang, ``{PS-TRUST}:
  Provably secure solution for truthful double spectrum auctions,'' in
  \emph{INFOCOM, 2014 Proceedings IEEE}.\hskip 1em plus 0.5em minus 0.4em\relax
  IEEE, 2014, pp. 1249--1257.

\bibitem{Chen2017Secure}
Z.~Chen, X.~Wei, H.~Zhong, J.~Cui, Y.~Xu, and S.~Zhang, ``Secure, efficient and
  practical double spectrum auction,'' in \emph{Quality of Service (IWQoS),
  2017 IEEE/ACM 25th International Symposium on}.\hskip 1em plus 0.5em minus
  0.4em\relax IEEE, 2017, pp. 1--6.

\bibitem{Dwork2008Differential}
C.~Dwork, ``Differential privacy: A survey of results,'' in \emph{International
  Conference on Theory and Applications of Models of Computation}.\hskip 1em
  plus 0.5em minus 0.4em\relax Springer, 2008, pp. 1--19.

\bibitem{Dwork2006Differential}
------, ``Differential privacy,'' in \emph{International Colloquium on
  Automata, Languages, and Programming}, 2006, pp. 1--12.

\bibitem{Zhu2014Differentially}
R.~Zhu, Z.~Li, F.~Wu, K.~Shin, and G.~Chen, ``Differentially private spectrum
  auction with approximate revenue maximization,'' in \emph{Proceedings of the
  15th ACM international symposium on mobile ad hoc networking and
  computing}.\hskip 1em plus 0.5em minus 0.4em\relax ACM, 2014, pp. 185--194.

\bibitem{Zhu2015Differentially}
R.~Zhu and K.~G. Shin, ``Differentially private and strategy-proof spectrum
  auction with approximate revenue maximization,'' in \emph{Computer
  Communications (INFOCOM), 2015 IEEE Conference on}.\hskip 1em plus 0.5em
  minus 0.4em\relax IEEE, 2015, pp. 918--926.

\bibitem{Mcsherry2007Mechanism}
F.~McSherry and K.~Talwar, ``Mechanism design via differential privacy,'' in
  \emph{Foundations of Computer Science, 2007. FOCS'07. 48th Annual IEEE
  Symposium on}.\hskip 1em plus 0.5em minus 0.4em\relax IEEE, 2007, pp.
  94--103.

\bibitem{Zhou2009TRUST}
X.~Zhou and H.~Zheng, ``{TRUST}: A general framework for truthful double
  spectrum auctions,'' in \emph{Infocom 2009, Ieee}.\hskip 1em plus 0.5em minus
  0.4em\relax IEEE, 2009, pp. 999--1007.

\bibitem{Jia2009Revenue}
J.~Jia, Q.~Zhang, Q.~Zhang, and M.~Liu, ``Revenue generation for truthful
  spectrum auction in dynamic spectrum access,'' in \emph{Proceedings of the
  tenth ACM international symposium on Mobile ad hoc networking and
  computing}.\hskip 1em plus 0.5em minus 0.4em\relax ACM, 2009, pp. 3--12.

\bibitem{Al2011Truthful}
M.~Al-Ayyoub and H.~Gupta, ``Truthful spectrum auctions with approximate
  revenue,'' in \emph{INFOCOM, 2011 Proceedings IEEE}.\hskip 1em plus 0.5em
  minus 0.4em\relax IEEE, 2011, pp. 2813--2821.

\bibitem{Xu2011Efficient}
P.~Xu, X.-Y. Li, and S.~Tang, ``Efficient and strategyproof spectrum
  allocations in multichannel wireless networks,'' \emph{IEEE Transactions on
  Computers}, vol.~60, no.~4, pp. 580--593, 2011.

\bibitem{Zhu2012Truthful}
Y.~Zhu, B.~Li, and Z.~Li, ``Truthful spectrum auction design for secondary
  networks,'' in \emph{INFOCOM, 2012 Proceedings IEEE}.\hskip 1em plus 0.5em
  minus 0.4em\relax IEEE, 2012, pp. 873--881.

\bibitem{Huang2013Near}
H.~Huang, Y.-e. Sun, X.-Y. Li, Z.~Chen, W.~Yang, and H.~Xu, ``Near-optimal
  truthful spectrum auction mechanisms with spatial and temporal reuse in
  wireless networks,'' in \emph{Proceedings of the fourteenth ACM international
  symposium on Mobile ad hoc networking and computing}.\hskip 1em plus 0.5em
  minus 0.4em\relax ACM, 2013, pp. 237--240.

\bibitem{Pan2011Purging}
M.~Pan, J.~Sun, and Y.~Fang, ``Purging the back-room dealing: Secure spectrum
  auction leveraging paillier cryptosystem,'' \emph{IEEE Journal on Selected
  Areas in Communications}, vol.~29, no.~4, pp. 866--876, 2011.

\bibitem{wu2015towards}
F.~Wu, Q.~Huang, Y.~Tao, and G.~Chen, ``Towards privacy preservation in
  strategy-proof spectrum auction mechanisms for noncooperative wireless
  networks,'' \emph{IEEE/ACM Transactions on Networking (TON)}, vol.~23, no.~4,
  pp. 1271--1285, 2015.

\bibitem{huang2012exponential}
Z.~Huang and S.~Kannan, ``The exponential mechanism for social welfare:
  Private, truthful, and nearly optimal,'' in \emph{Foundations of Computer
  Science (FOCS), 2012 IEEE 53rd Annual Symposium on}.\hskip 1em plus 0.5em
  minus 0.4em\relax IEEE, 2012, pp. 140--149.

\bibitem{Xu2017PADS}
J.~Xu, B.~Palanisamy, Y.~Tang, and S.~M. Kumar, ``{PADS}: Privacy-preserving
  auction design for allocating dynamically priced cloud resources,'' in
  \emph{Collaboration and Internet Computing (CIC), 2017 IEEE 3rd International
  Conference on}.\hskip 1em plus 0.5em minus 0.4em\relax IEEE, 2017, pp.
  87--96.

\bibitem{dong2018protecting}
X.~Dong, Y.~Gong, J.~Ma, and Y.~Guo, ``Protecting operation-time privacy of
  primary users in downlink cognitive two-tier networks,'' \emph{IEEE
  Transactions on Vehicular Technology}, 2018.

\bibitem{Wu2016Designing}
C.~Wu, Z.~Wei, F.~Wu, G.~Chen, and S.~Tang, ``Designing differentially private
  spectrum auction mechanisms,'' \emph{Wireless Networks}, vol.~22, no.~1, pp.
  105--117, 2016.

\bibitem{yao2011efficient}
E.~Yao, L.~Lu, and W.~Jiang, ``An efficient truthful double spectrum auction
  design for dynamic spectrum access,'' in \emph{Cognitive Radio Oriented
  Wireless Networks and Communications (CROWNCOM), 2011 Sixth International
  ICST Conference on}.\hskip 1em plus 0.5em minus 0.4em\relax IEEE, 2011, pp.
  181--185.

\bibitem{Nisan2007Algorithmic}
N.~Nisan, T.~Roughgarden, E.~Tardos, and V.~V. Vazirani, \emph{Algorithmic game
  theory}.\hskip 1em plus 0.5em minus 0.4em\relax Cambridge university press,
  2007.

\bibitem{Gupta2010Differentially}
A.~Gupta, K.~Ligett, F.~McSherry, A.~Roth, and K.~Talwar, ``Differentially
  private combinatorial optimization,'' in \emph{Proceedings of the
  twenty-first annual ACM-SIAM symposium on Discrete Algorithms}.\hskip 1em
  plus 0.5em minus 0.4em\relax SIAM, 2010, pp. 1106--1125.

\bibitem{dwork2014algorithmic}
C.~Dwork, A.~Roth \emph{et~al.}, ``The algorithmic foundations of differential
  privacy,'' \emph{Foundations and Trends{\textregistered} in Theoretical
  Computer Science}, vol.~9, no. 3--4, pp. 211--407, 2014.

\bibitem{Li2016Differential}
N.~Li, M.~Lyu, D.~Su, and W.~Yang, ``Differential privacy: From theory to
  practice,'' \emph{Synthesis Lectures on Information Security, Privacy, \&
  Trust}, vol.~8, no.~4, pp. 1--138, 2016.

\end{thebibliography}
\bibliographystyle{IEEEtran}



%

%
%
%

\end{document}